\let\c@algorithm\relax
\let\c@figure\relax
\newaliascnt{figure}{float}
\newaliascnt{algorithm}{float}
\algrenewcommand\algorithmicrequire{\textbf{Input:}}
\algrenewcommand\algorithmicensure{\textbf{Output:}}
\theoremstyle{definition}
\declaretheorem[qed=$\Box$, name=Definition]{definition}
\declaretheorem[qed=$\Box$, name=Example, sibling=definition]{example}
\declaretheorem[qed=$\blacksquare$, name=Observation, sibling=definition]{observation}
\declaretheorem[qed=$\blacksquare$, name=Proof idea, style=remark, numbered=no]{proofidea}
\declaretheorem[name=Corollary, refname={Corollary,Corollaries}, sibling=definition]{corollary}
\declaretheorem[name=Theorem, sibling=definition]{theorem}
\declaretheorem[name=Proposition, sibling=definition]{proposition}
\declaretheorem[name=Lemma, sibling=definition]{lemma}
\DeclareMathSymbol{:}{\mathpunct}{operators}{"3A}
\setlist[enumerate,1]{label=(\roman*)}
\newlist{pcases}{enumerate}{5}
\setlist[pcases,1]{
  align=left,
  labelsep=0pt,
  leftmargin=1em,
  labelwidth=0pt,
  parsep=0pt,
  font=\itshape,
  labelsep=1em,
  label={Case~\arabic*:},
  ref={\arabic*}
}
\setlist[pcases,2]{
  align=left,
  labelsep=0pt,
  leftmargin=1em,
  labelwidth=0pt,
  parsep=0pt,
  font=\itshape,
  labelsep=1em,
  label={Case~\arabic{pcasesi}.\arabic*:},
  ref={\arabic{pcasesi}.\arabic*}
}
\setlist[pcases,3]{
  align=left,
  labelsep=0pt,
  leftmargin=1em,
  labelwidth=0pt,
  parsep=0pt,
  font=\itshape,
  labelsep=1em,
  label={Case~\arabic{pcasesi}.\arabic{pcasesii}.\arabic*:},
  ref={\arabic{pcasesi}.\arabic{pcasesii}.\arabic*}
}
\setlist[pcases,4]{
  align=left,
  labelsep=0pt,
  leftmargin=1em,
  labelwidth=0pt,
  parsep=0pt,
  font=\itshape,
  labelsep=1em,
  label={Case~\arabic{pcasesi}.\arabic{pcasesii}.\arabic{pcasesiii}.\arabic*:},
  ref={\arabic{pcasesi}.\arabic{pcasesii}.\arabic{pcasesiii}.\arabic*}
}
\setlist[pcases,5]{
  align=left,
  labelsep=0pt,
  leftmargin=1em,
  labelwidth=0pt,
  parsep=0pt,
  font=\itshape,
  labelsep=1em,
  label={Case~\arabic{pcasesi}.\arabic{pcasesii}.\arabic{pcasesiii}.\arabic{pcasesiv}.\arabic*:},
  ref={\arabic{pcasesi}.\arabic{pcasesii}.\arabic{pcasesiii}.\arabic{pcasesiv}.\arabic*}
}
\crefname{algorithm}{Alg.}{Algs.}
\Crefname{algorithm}{Algorithm}{Algorithms}
\crefname{conjecture}{Conj.}{Conjs.}
\Crefname{conjecture}{Conjecture}{Conjectures}
\crefname{construction}{Con.}{Cons.}
\Crefname{construction}{Construction}{Constructions}
\crefname{corollary}{Cor.}{Cors.}
\Crefname{corollary}{Corollary}{Corollaries}
\crefname{definition}{Def.}{Defs.}
\Crefname{definition}{Definition}{Definitions}
\crefname{example}{Ex.}{Exs.}
\Crefname{example}{Example}{Examples}
\crefname{figure}{Fig.}{Figs.}
\Crefname{figure}{Figure}{Figures}
\crefname{lemma}{Lem.}{Lems.}
\Crefname{lemma}{Lemma}{Lemmas}
\crefname{theorem}{Thm.}{Thms.}
\Crefname{theorem}{Theorem}{Theorems}
\crefname{observation}{Obs.}{Obs.}
\Crefname{observation}{Observation}{Observations}
\crefname{proposition}{Prop.}{Props.}
\Crefname{proposition}{Proposition}{Propositions}
\crefname{section}{Sec.}{Secs.}
\Crefname{section}{Section}{Sections}
\newcommand\comp{\mathop{;}}
\newcommand\dom{\mathrm{dom}}
\newcommand\img{\mathrm{img}}
\newcommand\Runs{\mathrm{R}}
\newcommand\wt{\mathrm{wt}}
\renewcommand\det{\mathrm{det}}
\newcommand\app[2]{#1\ifthenelse{\isempty{#2}}{}{(#2)}}
\DeclareMathOperator\tsbottom{bottom}
\DeclareMathOperator\down{down}
\DeclareMathOperator\equals{top}
\DeclareMathOperator\pop{pop}
\DeclareMathOperator\push{push}
\DeclareMathOperator\up{up}
\newcommand\parto{\mathrel{\dashrightarrow}}
\title{\texorpdfstring{Approximation of \\ Weighted Automata with Storage}{Approximation of Weighted Automata with Storage}}
\author{%
  Tobias Denkinger %
  \institute{%
    Faculty of Computer Science, %
    Technische Universität Dresden, \\ %
    Nöthnitzer Str. 46, 01062 Dresden, Germany \\ %
    \email{tobias.denkinger@tu-dresden.de}%
  }%
}
\begin{document}
\maketitle

\begin{abstract}
  We use a non-deterministic variant of storage types to develop a framework for the approximation of automata with storage.
  This framework is used to provide automata-theoretic views on the approximation of multiple context-free languages and on coarse-to-fine parsing.
\end{abstract}

\section{Introduction}

Formal grammars (e.g. context-free grammars) are used to model natural languages.
Language models are often incorporated into systems that have to guarantee a certain response time, e.g. translation systems or speech recognition systems.
The desire for low response times and the high parsing complexity of the used formal grammars are at odds.
Thus, in real-world applications, the language model is often replaced by another language model that is easier to parse but still captures the desired natural language reasonably well.
This new language model is called an \emph{approximation} of the original language model.
Nederhof~\cite{Ned00a} gives an overview for the approximation of context-free grammars.

In order to approximate a context-free grammar it is common (but not exclusive \cite{Ned00,Cha+06}) to first construct an equivalent pushdown automaton and then approximate this automaton \cite{KraTom81,Pul86,LanLan87,BerSch90,PerWri91,Eva97,Joh98}, e.g. by restricting the height of the pushdown.
Automata with storage \cite{Sco67,Gol79,Eng86,Eng14} generalise pushdown automata.
By attaching weights to the transitions of an automaton with storage, we can model, e.g. the \emph{multiplicity} with which a word belongs to a language or the \emph{cost} of recognising a word~\cite{Sch62,Eil74}.
The resulting devices are called \emph{weighted} automata with storage and were studied in recent literature \cite{HerVog15,VogDroHer16}.
Multiple context-free languages (MCFLs) \cite{SekMatFujKas91,VijWeiJos87} are currently studied as language models because they can express the non-projective constituents and discontinuous dependencies that occur in natural languages \cite{Mai10,KuhSat09}.
Their approximation was recently investigated from a grammar-centric viewpoint \cite{BurLju05,Cra12}.
MCFLs can be captured by automata with specific storage \cite{Vil02,Den16}, which allows an automata-theoretic view on their approximation.

We develop a framework to study the approximation of weighted automata with arbitrary storage.
To deal with non-determinism that arises due to approximation, we use automata with \emph{data storage} \cite{Gol79} which allow instructions to be non-deterministic;\footnote{We add predicates to Goldstine's original definition of data storage.  This does not increase their expressiveness (\cref{lem:predicate-free-normal-form}).} and we investigate their relation to automata with storage (\cref{sec:nd-storage}).
Weighted automata with data storage differ from Engelfriet's automata with storage~\cite{Eng86,Eng14} in two aspects: As instructions we allow binary relations instead of partial functions and each transition is associated with a weight from a semiring.
Using a powerset construction, we show that (weighted) automata with data storage have the same expressive power as (weighted) automata with storage (\cref{lem:implementing-of-nd-storage,lem:implementing-of-nd-storage-weighted}).
Our formalisation of strategies for approximating data storage (called \emph{approximation strategies}) is inspired by the storage simulation of Hoare~\cite{Hoa72,EngVog86}.
We use partial functions as approximation strategies (\cref{sec:approximation}).
Properties of the approximation strategy imply properties of the while approximation process:
If an approximation strategy is a total function, then we have a superset approximation (\cref{thm:superset-approximation,thm:weighted-approx-types}\ref{item:wt-approx-types:over}).
If an approximation strategy is injective, then we have a subset approximation (\cref{thm:subset-approximation,thm:weighted-approx-types}\ref{item:wt-approx-types:under}).
In contrast to Engelfriet and Vogler~\cite{EngVog86}, we do not utilise flowcharts in our constructions.
We demonstrate the benefit of our framework by providing an automata-based view on the approximation of MCFLs
(\cref{sec:approximation-mcfl}) and by describing an algorithm for coarse-to-fine parsing of weighted automata with data storage (\cref{sec:coarse-to-fine-parsing}).

\section{Preliminaries}

The set $\{0, 1, 2, … \}$ of \emph{natural numbers} is denoted by $ℕ$, $ℕ ∖ \{0\}$ is denoted by $ℕ_+$, and $\{1, …, k\}$ is denoted by $[k]$ for every $k ∈ ℕ$ (note that \([0] = ∅\)).
Let $A$ be a set.
The \emph{power set of $A$} is denoted by $\mathcal{P}(A)$.

Let $A$, $B$, and $C$ be sets and let $r ⊆ A × B$ and $s ⊆ B × C$ be binary relations.
We denote $\{(b, a) ∈ B × A ∣ (a, b) ∈ r \}$ by $r^{-1}$, $\{ b ∈ B ∣ (a, b) ∈ r \}$ by $r(a)$ for every $a ∈ A$, and $⋃_{a ∈ A'} r(a)$ by $r(A')$ for every $A' ⊆ A$.
The \emph{sequential composition of $r$ and $s$} is the binary relation
\(
  r \comp s = \{ (a, c) ∈ A × C ∣ ∃b ∈ B: ((a, b) ∈ r) ∧ ((b, c) ∈ s) \}\text{.}
\)
We call $r$ an \emph{endorelation (on $A$)} if $A = B$.
%
A \emph{semiring} is an algebraic structure $(K, {+}, {⋅}, 0, 1)$ where $(K, {+}, 0)$ is a commutative monoid, $(K, {⋅}, 1)$ is a monoid, 0 is absorptive with respect to ${⋅}$, and ${⋅}$ distributes over ${+}$.
We say that $K$ is \emph{complete} if it has a sum operation $∑_I: K^I → K$ that extends ${+}$ for each countable set~$I$ \cite[Sec.~2]{DroKui09}.
Let~$≤$ be a partial order on~$K$.
We say that \emph{$K$ is positively~${≤}$-ordered} if
  $+$ preserves $≤$ (i.e. for each $a, b, c ∈ K$ with $a ≤ b$ holds $a + c ≤ b + c$),
  $⋅$ preserves $≤$ (i.e. for each $a, b, c ∈ K$ with $a ≤ b$ holds $a ⋅ c ≤ b ⋅ c$ and $c ⋅ a ≤ c ⋅ b$), and
  $0 ≤ a$ for each $a ∈ K$ (cf. Droste and Kuich~\cite[Sec.~2]{DroKui09}).

The \emph{set of partial functions from $A$ to $B$} is denoted by $A \parto B$.
The \emph{set of (total) functions from $A$ to $B$} is denoted by $A → B$.
Let $f: A \parto B$ be a partial function.
The \emph{domain of $f$} and the \emph{image of $f$} are defined by
$\dom(f) = \{ a ∈ A ∣ ∃b ∈ B: f(a) = b \}$ and
$\img(f) = \{ b ∈ B ∣ ∃a ∈ A: f(a) = b \}$, respectively.
Abusing the notation, we may sometimes write $f(a) = \text{undefined}$ to denote that $a ∉ \dom(f)$.
Note that every total function is a partial function and that each partial function is a binary relation.

\section{Automata with data storage}
\label{sec:nd-storage}

In addition to the finite state control, automata with storage are allowed to check and manipulate a storage configuration that comes from a possibly infinite set.
We propose a syntactic extension of automata with storage where the set of unary functions (the \emph{instructions}) is replaced by a set of binary relations on the storage configurations.

\subsection{Data storage}

\begin{definition}
  A \emph{data storage} is a tuple \(S = (C, P, R, c_{\text{i}})\) where
    $C$ is a set (of \emph{storage configurations}),
    \(P ⊆ \mathcal{P}(C)\) (\emph{predicates}),
    \(R ⊆ \mathcal{P}(C × C)\) (\emph{instructions}),
    \(c_{\text{i}} ∈ C\) (\emph{initial storage configuration}), and
    the set $r(c)$ is finite for every $r ∈ R$ and $c ∈ C$.
\end{definition}
Our definition of data storage differs from the original definition \cite[Def.~3.1]{Gol79} in that we have predicates.
The “data storage types” introduced by Herrmann and Vogler \cite[Sec.~3]{HerVog16} are similar to our data storages.
For instructions they use partial functions that may depend on the input of the automaton in addition to the current storage configuration instead of binary relations on storage configurations.

Consider a data storage \(S = (C, P, R, c_{\text{i}})\).
If every element of $R$ is a partial function, we call $S$ \emph{deterministic}.
The definition of “deterministic data storage” in this paper coincides with the definition of “storage type” in previous literature \cite{HerVog15,VogDroHer16}.

\begin{example}
  The deterministic data storage $\textrm{Count}$ models simple counting (Engelfriet~\cite[Def.~3.4]{Eng86,Eng14}):
  \(\mathrm{Count} = (ℕ, \{ℕ, ℕ₊, \{0\}\}, \{\mathrm{inc}, \mathrm{dec}\}, 0)\) where
  \(\mathrm{inc} = \{(n, n+1) ∣ n ∈ ℕ\}\) and
  \(\mathrm{dec} = \mathrm{inc}^{-1}\).
\end{example}
  
\begin{example}\label{ex:pd}
  The following deterministic data storage models pushdown storage:%
  \footnote{We allows (in comparison to Engelfriet~\cite[Def.~3.2]{Eng86,Eng14}) the execution of (some) instructions on the empty pushdown.}
  \(\mathrm{PD}_Γ = (Γ^*, P_{\text{pd}}, R_{\text{pd}}, ε)\) where
  \(Γ\) is a nonempty finite set (\emph{pushdown symbols});
  \(P_{\text{pd}} = \{Γ^*, \mathrm{bottom}\} ∪ \{\mathrm{top}_γ ∣ γ ∈ Γ\}\) with
  \( \mathrm{bottom} = \{ ε \} \) and \( \mathrm{top}_γ = \{γw ∣ w ∈ Γ^*\} \) for every $γ ∈ Γ$; and
  $R_{\text{pd}} = \{\mathrm{stay}, \mathrm{pop}\} ∪ \{\mathrm{push}_γ ∣ γ ∈ Γ\} ∪ \{ \mathrm{stay}_γ ∣ γ ∈ Γ\}$ with
  \( \mathrm{stay} = \{(w, w) ∣ w ∈ Γ^*\} \),
  \( \mathrm{pop} = \{(γw, w) ∣ w ∈ Γ^*, γ ∈ Γ\} \),
  \( \mathrm{push}_γ = \{(w, γw) ∣ w ∈ Γ^*\} \), and
  \( \mathrm{stay}_γ = \{(γ'w, γw) ∣ w ∈ Γ^*, γ' ∈ Γ\} \) for every $γ ∈ Γ$.
\end{example}

We call a data storage $S = (C, P, R, c_{\text{i}})$ \emph{boundedly non-deterministic (short: boundedly nd)} if there is a natural number $k$ such that $\lvert r(c) \rvert ≤ k$ holds for every $r ∈ R$ and $c ∈ C$.
  The following two examples illustrate that each deterministic data storage is also boundedly nd, but not vice versa.

\begin{example}\label{ex:pd-with-pop-star}
  $\mathrm{PD}_Γ'$ extends $\mathrm{PD}_Γ$ (cf. \cref{ex:pd}) by adding an instruction $\mathrm{pop}^*$ that allows us to remove arbitrarily many symbols from the top of the pushdown:
  \(\mathrm{PD}_Γ' = (Γ^*, P_{\mathrm{pd}}, R_{\mathrm{pd}} ∪ \{\mathrm{pop}^*\}, ε)\) where
  \(\mathrm{pop}^* = \{(uw, w) ∣ u, w ∈ Γ^*\}\).
  
  The tuple $\mathrm{PD}_Γ'$ is a data storage because
  $\lvert \mathrm{stay}(w) \rvert = 1$,
  $\lvert \mathrm{pop}(w) \rvert ≤ 1$,
  $\lvert \mathrm{push}_γ(w) \rvert = 1$,
  $\lvert \mathrm{stay}_γ(w) \rvert ≤ 1$, and
  $\lvert \mathrm{pop}^*(w) \rvert = \lvert w \rvert + 1$
  for each $w ∈ Γ^*$ and $γ ∈ Γ$ are all finite.
  But $\mathrm{PD}_Γ'$ is not boundedly nd.
  Assume that it were.
  Then there would be a number $k ∈ ℕ$ such that $\lvert r(w) \rvert ≤ k$ for every $r ∈ R_{\text{pd}}$ and $w ∈ Γ^*$.
  But if we take some $w' ∈ Γ^*$ of length $k$, then $\lvert \mathrm{pop}^*(w') \rvert = k + 1 > k$ which contradicts our assumption.
\end{example}

\begin{example}\label{ex:pd-with-push-prime}
   The data storage $\mathrm{PD}_Γ''$ extends $\mathrm{PD}_Γ$ (cf. \cref{ex:pd}) by adding an instruction $\mathrm{push}_Γ$ that allows us to add an  arbitrary symbol from $Γ$ the top of the pushdown:
  \(\mathrm{PD}_Γ'' = (Γ^*, P_{\mathrm{pd}}, R_{\mathrm{pd}} ∪ \{\mathrm{push}_Γ\}, ε)\) where
  \(\mathrm{push}_Γ = \{(w, wγ) ∣ w ∈ Γ^*, γ ∈ Γ\}\).
  
    The data storage $\mathrm{PD}_Γ''$ is boundedly nd because if we take bound $k = \lvert Γ \rvert$, then
    $\lvert \mathrm{stay}(w) \rvert = 1 ≤ k$,
    $\lvert \mathrm{pop}(w) \rvert ≤ 1 ≤ k$,
    $\lvert \mathrm{push}_γ(w) \rvert = 1 ≤ k$,
    $\lvert \mathrm{stay}_γ(w) \rvert ≤ 1 ≤ k$, and
    $\lvert \mathrm{push}_Γ(w) \rvert = \lvert Γ \rvert ≤ k$.
    In particular, if $\lvert Γ \rvert > 1$, then $\mathrm{PD}_Γ''$ is not deterministic because $\lvert \mathrm{push}_Γ(w) \rvert = \lvert Γ \rvert > 1$.
\end{example}

\subsection{Automata with data storage}

\begin{quote}
\emph{For the rest of this paper let $Σ$ be an arbitrary non-empty finite set.}
\end{quote}

\begin{definition}
  Let \(S = (C, P, R, c_{\text{i}})\) be a data storage.
  An \emph{$(S, Σ)$-automaton} is a tuple \(ℳ = (Q, T, Q_{\text{i}}, Q_{\text{f}})\) where
    $Q$ is a finite set (of \emph{states}),
    $T$ is a finite subset of \(Q × (Σ ∪ \{ε\}) × P × R × Q\) (\emph{transitions}),
    \(Q_{\text{i}} ⊆ Q\) (\emph{initial states}), and
    \(Q_{\text{f}} ⊆ Q\) (\emph{final states}).
\end{definition}
  Let \(ℳ = (Q, T, Q_{\text{i}}, Q_{\text{f}})\) be an $(S, Σ)$-automaton and \(S = (C, P, R, c_{\text{i}})\).
  An \emph{$ℳ$-configuration} is an element of \(Q × C × Σ^*\).
  For every \(τ = (q, v, p, r, q') ∈ T\), the \emph{transition relation of $τ$} is the endorelation $⊢_τ$ on the set of $ℳ$-configurations that contains \((q, c, vw) ⊢_τ (q', c', w)\) for every $w ∈ Σ^*$ and $(c, c') ∈ r$ with $c ∈ p$.
  The \emph{run relation of $ℳ$} is $⊢_ℳ = ⋃_{τ ∈ T} {⊢_τ}$.
  The transition relations are extended to sequences of transitions by setting \({⊢_{τ_1⋯τ_k}} = {⊢_{τ_1}} \comp … \comp {⊢_{τ_k}}\) for every \(k ∈ ℕ\) and \(τ_1, …, τ_k ∈ T\).
  In particular, for the case $k =0$ we use the identity on $Q × C × Σ^*$: ${⊢_ε} = \{ (d, d) ∣ d ∈ Q × C × Σ^* \}$.
  The \emph{set of runs of $ℳ$} is the set
  \begin{equation}
    \Runs_ℳ = \big\{ θ ∈ T^* ∣ ∃q, q' ∈ Q, c, c' ∈ C, w, w' ∈ Σ^*: (q, c, w) ⊢_θ (q', c', w') \big\}\text{.}\label{eq:runs}
  \end{equation}
  Let $w ∈ Σ^*$.
  The \emph{set of runs of $ℳ$ on $w$} is
  \(
    \Runs_ℳ(w) = \big\{ θ ∈ T^* ∣ ∃q ∈ Q_{\text{i}}, q' ∈ Q_{\text{f}}, c' ∈ C: (q, c_{\text{i}}, w) ⊢_θ (q', c', ε)\big\}\text{.}\label{eq:runs-word}
  \)
  The \emph{language accepted by $ℳ$} is the set \( L(ℳ) = \{w ∈ Σ^* ∣ \Runs_ℳ(w) ≠ ∅ \}\).
  Let $S$ be a data storage and $L ⊆ Σ^*$.
  We call $L$ \emph{$(S, Σ)$-recognisable} if there is an $(S, Σ)$-automaton $ℳ$ with $L = L(ℳ)$.

\begin{figure}[t]
  \centering
  \begin{tikzpicture}[->, >=stealth', auto, node distance=9em]
    \node[initial  , state] (1)             {1};
    \node[state]            (2) [right of=1] {2};
    \node[accepting, state] (3) [right of=2] {3};
    
    \path (1) edge [loop above] node {$\text{a}, Γ^*                    , \mathrm{push}_Γ$} (1)
              edge [loop below] node {$\text{b}, Γ^*                    , \mathrm{push}_Γ$} (1)
              edge              node {$\#      , Γ^*                    , \mathrm{stay}$}   (2)
          (2) edge [loop above] node {$\text{a}', \mathrm{top}_{\text{a}}, \mathrm{pop}$}    (2)
              edge [loop below] node {$\text{b}', \mathrm{top}_{\text{b}}, \mathrm{pop}$}    (2)
              edge              node {$ε       , \mathrm{bottom}        , \mathrm{stay}$}   (3);
  \end{tikzpicture}  
  \caption{Graph of the $(\mathrm{PD}_Γ'', Σ)$-automaton $ℳ$ from \cref{ex:automaton-with-storage}}
  \label{fig:automaton-with-storage}
\end{figure}

\begin{example}\label{ex:automaton-with-storage}
  Recall the data storage $\mathrm{PD}_Γ''$ from \cref{ex:pd-with-push-prime}.
  Let $Σ = \{\text{a}, \text{b}, \#, \text{a}', \text{b}'\}$ and $Γ = \{\text{a}, \text{b}\}$, and consider the $(\mathrm{PD}_Γ'', Σ)$-automaton $ℳ = ([3], T, \{1\}, \{3\})$ where
  \begin{align*}
    T:\quad
      &\begin{array}[t]{@{(}l@{,\,}l@{,\,}l@{,\,}l@{,\,}l@{)}}
      1 & \text{a}  & Γ^*                     & \mathrm{push}_Γ & 1 \\
      2 & \text{a}' & \mathrm{top}_{\text{a}}   & \mathrm{pop}    & 2
    \end{array}
      &&\begin{array}[t]{@{(}l@{,\,}l@{,\,}l@{,\,}l@{,\,}l@{)}}
      1 & \text{b}  & Γ^*                     & \mathrm{push}_Γ & 1 \\
      2 & \text{b}' & \mathrm{top}_{\text{b}}   & \mathrm{pop}    & 2
    \end{array}
      &&\begin{array}[t]{@{(}l@{,\,}l@{,\,}l@{,\,}l@{,\,}l@{)}l}
      1 & \#       & Γ^*                      & \mathrm{stay}   & 2 \\
      2 & ε        & \mathrm{bottom}          & \mathrm{stay}   & 3 & \text{.} 
    \end{array}
  \end{align*}
  The graph of $ℳ$ is shown in \cref{fig:automaton-with-storage}.
  The label of each edge in the graph contains the input that is read by the corresponding transition, the predicate that is checked, and the instruction that is executed.
  The language recognised by $ℳ$ is
  \( L(ℳ) = \{ u \# v ∣ u ∈ \{\text{a}, \text{b}\}^*, v ∈ \{\text{a}', \text{b}'\}^*, \lvert u \rvert = \lvert v \rvert\}\).
  The automaton $ℳ$ recognises a given word $u \# v$ (with $u ∈ \{\text{a}, \text{b}\}^*$ and $v ∈ \{\text{a}', \text{b}'\}^*$) as follows:
    In state 1, it reads the prefix $u$ and constructs any element of $Γ^*$ of length $\lvert u \rvert$ on the pushdown non-deterministically.
    It then reads $\#$ and goes to state 2.
    In state 2, it reads $\text{a}'$ for each $\text{a}$ on the pushdown and it reads $\text{b}'$ for each $\text{b}$ on the pushdown until the pushdown is empty.
    Since the pushdown can contain any sequence over $\{\text{a}, \text{b}\}$ of length $\lvert u \lvert$, $ℳ$ can read any sequence of $\{\text{a}', \text{b}'\}$ of length $\lvert u \rvert$, ensuring that $\lvert u \rvert = \lvert v \rvert$.
\end{example}

We call a data storage $S = (C, P, R, c_{\text{i}})$ \emph{predicate-free} if $P = \{ C \}$.\footnote{Even though $S$ has a predicate $C$, we still call it predicate-free since $C$ is trivial, i.e. $C$ accepts any storage configuration.}
The following lemma shows that predicate-free-ness is a normal form among data storages.

\begin{lemma}\label{lem:predicate-free-normal-form}
  For every data storage $S$ there is a predicate-free data storage $S'$ such that the classes of $(S, Σ)$-recognisable languages and the class of $(S', Σ)$-recognisable languages are the same.
\end{lemma}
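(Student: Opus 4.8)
The plan is to fold every predicate into the instructions that are guarded by it. Given $S = (C, P, R, c_{\text{i}})$, for each $r \in R$ and $p \in P$ set $r_p = r \cap (p \times C) = \{(c,c') \in r \mid c \in p\}$, and define
\[
  S' = (C, \{C\}, R', c_{\text{i}})
  \qquad\text{with}\qquad
  R' = \{\, r_p \mid r \in R,\ p \in P \,\}.
\]
First I would verify that $S'$ is a data storage: $\{C\} \subseteq \mathcal{P}(C)$ and $R' \subseteq \mathcal{P}(C \times C)$ are immediate, and for every $r_p \in R'$ and every $c \in C$ we have $r_p(c) \subseteq r(c)$, so $r_p(c)$ is finite. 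It is predicate-free by construction.

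For the inclusion from $(S,\Sigma)$-recognisable to $(S',\Sigma)$-recognisable I would take an arbitrary $(S,\Sigma)$-automaton $\mathcal{M} = (Q, T, Q_{\text{i}}, Q_{\text{f}})$ and put $\mathcal{M}' = (Q, T', Q_{\text{i}}, Q_{\text{f}})$ with
\[
  T' = \{\, (q, v, C, r_p, q') \mid (q, v, p, r, q') \in T \,\},
\]
which is a well-formed $(S',\Sigma)$-automaton because each $r_p$ lies in $R'$ and the only predicate used is $C$. The crucial step is the transition-wise identity of run relations: for $\tau = (q,v,p,r,q') \in T$ and the associated $\tau' = (q,v,C,r_p,q')$ one checks that $(q,c,vw) \vdash_{\tau} (q',c',w)$ holds iff $(c,c') \in r$ and $c \in p$, iff $(c,c') \in r_p$ (and trivially $c \in C$), iff $(q,c,vw) \vdash_{\tau'} (q',c',w)$; hence $\vdash_{\tau} = \vdash_{\tau'}$. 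Therefore the run relations $\vdash_{\mathcal{M}}$ and $\vdash_{\mathcal{M}'}$ coincide, and since $Q_{\text{i}}$, $Q_{\text{f}}$ and $c_{\text{i}}$ are unchanged while $L(\cdot)$ depends only on the run relation together with these data (reachability via $\vdash_{\theta}$ for $\theta$ ranging over transition sequences is just the reflexive–transitive closure of $\vdash_{\mathcal{M}}$), we get $L(\mathcal{M}) = L(\mathcal{M}')$.

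The reverse inclusion is obtained by the symmetric construction. Given an $(S',\Sigma)$-automaton $\mathcal{M}'$, fix for each instruction $s \in R'$ a pair $(r,p)$ with $r \in R$, $p \in P$ and $s = r_p$ (such a pair exists by the definition of $R'$; the choice need not be canonical), and replace every transition $(q,v,C,s,q')$ of $\mathcal{M}'$ by $(q,v,p,r,q')$. The same computation as above shows the run relations agree, so the resulting $(S,\Sigma)$-automaton recognises $L(\mathcal{M}')$.

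I do not expect a genuine obstacle here — the argument is essentially bookkeeping — but one point deserves care: one cannot shortcut the reverse direction by saying that a predicate-free automaton is already an $(S,\Sigma)$-automaton, since the trivial predicate $C$ need not be an element of $P$; the folding must be undone explicitly through the choice of $(r,p)$ for each $s$. It is also worth noting the degenerate case $P = \varnothing$: then $R' = \varnothing$ as well, every $(S,\Sigma)$- and every $(S',\Sigma)$-automaton has an empty transition set, and both recognisability classes consist of exactly $\varnothing$ and $\{\varepsilon\}$ — which already falls out of the general argument without extra work.
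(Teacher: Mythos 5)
Your proof is correct and carries out exactly the strategy the paper sketches in its one-line proof idea (``encode the predicates of $S$ in the instructions of $S'$''), namely replacing each guarded instruction by the restricted relation $r_p = r \cap (p \times C)$ and checking that the transition relations coincide in both directions. Your observation that the reverse direction needs an explicit (non-canonical) choice of preimage $(r,p)$ because $C$ need not belong to $P$ is a worthwhile detail the paper leaves implicit.
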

\begin{proof}[Proof idea]
  Encode the predicates of $S$ in the instructions of $S'$.
\end{proof}

\begin{proposition}\label{lem:implementing-of-nd-storage}
  For every data storage $S$ there is a deterministic data storage $\det(S)$ such that the class of $(S, Σ)$-recognisable languages is equal to the class of $(\det(S), Σ)$-recognisable languages.
\end{proposition}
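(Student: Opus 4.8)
The plan is to reduce to the predicate-free case and then carry out a powerset construction on the storage configurations.
  By \cref{lem:predicate-free-normal-form} I may assume that $S$ itself is predicate-free; write $S = (C, \{C\}, R, c_{\text{i}})$.
  I would then set $\det(S) = (\mathcal{C}, \{\mathcal{C}\}, \widehat R, \{c_{\text{i}}\})$, where $\mathcal{C}$ is the set of all \emph{non-empty} finite subsets of $C$, where $\widehat R = \{\widehat r ∣ r ∈ R\}$, and where $\widehat r$ is the partial function on $\mathcal{C}$ with $\widehat r(X) = \bigcup_{c ∈ X} r(c)$ if this set is non-empty and $\widehat r(X)$ undefined otherwise.
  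A short check shows that $\det(S)$ is a data storage: $\bigcup_{c ∈ X} r(c)$ is finite because $X$ is finite and each $r(c)$ is finite, so it lies in $\mathcal{C}$ whenever it is non-empty; and $\det(S)$ is deterministic because each $\widehat r$ is a partial function.

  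To compare recognising power I would translate automata by replacing $r$ with $\widehat r$ in every transition: an $(S, Σ)$-automaton $ℳ = (Q, T, Q_{\text{i}}, Q_{\text{f}})$ is sent to the $(\det(S), Σ)$-automaton $ℳ' = (Q, T', Q_{\text{i}}, Q_{\text{f}})$ with $T' = \{(q, v, \mathcal{C}, \widehat r, q') ∣ (q, v, C, r, q') ∈ T\}$, and conversely.
  Since $\det(S)$ is predicate-free with instruction set exactly $\widehat R$, every $(\det(S), Σ)$-automaton arises in this way, so it suffices to show $L(ℳ) = L(ℳ')$ for corresponding $ℳ$ and $ℳ'$.
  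The heart of the argument is the following claim, to be proved by induction on $n$: for $\theta = \tau_1 \cdots \tau_n ∈ T^*$ with $\tau_i = (q_{i-1}, v_i, C, r_i, q_i)$, and writing $r_\theta = r_1 \comp \cdots \comp r_n$ and $v_\theta = v_1 \cdots v_n$, one has $(q_0, c, v_\theta w) ⊢_\theta (q_n, c', w)$ in $ℳ$ iff $(c, c') ∈ r_\theta$, whereas $(q_0, X, v_\theta w) ⊢_{\theta'} (q_n, X', w)$ in $ℳ'$ iff both $r_\theta(X) ≠ ∅$ and $X' = r_\theta(X)$.
  In the induction step for the second statement one uses that emptiness propagates along composition (if $r_{\tau_1 \cdots \tau_k}(X) = ∅$, then $r_{\tau_1 \cdots \tau_{k+1}}(X) = ∅$), so the requirement that \emph{every} intermediate image be non-empty collapses to the single requirement $r_\theta(X) ≠ ∅$.

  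Instantiating the claim at $X = \{c_{\text{i}}\}$ yields, for $q_0 ∈ Q_{\text{i}}$ and $q_n ∈ Q_{\text{f}}$, that $\theta ∈ \Runs_ℳ(v_\theta)$ and $\theta' ∈ \Runs_{ℳ'}(v_\theta)$ are both equivalent to $r_\theta(\{c_{\text{i}}\}) ≠ ∅$; hence $L(ℳ) = L(ℳ')$, and the class of $(S, Σ)$-recognisable languages coincides with the class of $(\det(S), Σ)$-recognisable languages.

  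The one genuinely delicate point — the place where a naive variant of the construction fails — is the treatment of $∅$.
  If one kept $∅$ as a configuration and made every $\widehat r$ total (sending $∅$ to $∅$), then $ℳ'$ could keep firing transitions well after $ℳ$ has become stuck, and $L(ℳ')$ would collapse to the language of the finite automaton underlying $ℳ$.
  Excluding $∅$ from $\mathcal{C}$ and leaving $\widehat r$ undefined there is precisely what records ``$ℳ$ is stuck''; the remaining work is the routine induction on run length and the verification that $\det(S)$ satisfies the finiteness condition in the definition of data storage.
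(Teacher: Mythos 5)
Your proposal is correct and follows essentially the same route as the paper: reduce to the predicate-free case via \cref{lem:predicate-free-normal-form}, apply a powerset construction in which each instruction becomes a partial function undefined exactly where the image would be empty, translate automata transition-wise, and conclude by induction on run length. The only differences are cosmetic — you restrict the configurations to non-empty finite subsets where the paper uses all of $\mathcal{P}(C)$ (both handle the ``stuck'' case by leaving the instruction undefined), and your induction invariant tracks the reachable set $r_\theta(X)$ directly where the paper states an equivalent quantified version.
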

\begin{proof}
  Due to \cref{lem:predicate-free-normal-form} we can assume that $S$ is predicate-free.
  Thus, let \(S = (C, \{ C \}, R, c_{\text{i}})\).
  Using a power set construction, we obtain the deterministic data storage \(\det(S) = (\mathcal{P}(C), \{ \mathcal{P}(C) \}, \det(R), \{c_{\text{i}}\})\) where
    \(\det(R) = \{ \det(r) ∣ r ∈ R \}\) with \(\det(r) = \{ (d, r(d)) ∣ d ⊆ C, r(d) ≠ ∅\}\) for every $r ∈ R$.

  Let \(ℳ = (Q, T, Q_{\text{i}}, Q_{\text{f}})\) be an $(S, Σ)$-automaton and  \(ℳ' = (Q, T', Q_{\text{i}}, Q_{\text{f}})\) be a $(\det(S), Σ)$-automaton.
  We say that $ℳ$ and $ℳ'$ are \emph{related} if $T' = \det(T) = \{\det(τ) ∣ τ ∈ T\}$ with $\det(τ) = (q, v, \mathcal{P}(C), \det(r), q')$ for each $τ = (q, v, C, r, q') ∈ T$.
  Clearly, for every $(S, Σ)$-automaton there is an $(\det(S), Σ)$-automaton such that both are related, and vice versa.

  Now let \(ℳ = (Q, T, Q_{\text{i}}, Q_{\text{f}})\) be an $(S, Σ)$-automaton and \(ℳ' = (Q, \det(T), Q_{\text{i}}, Q_{\text{f}})\) be a $(\det(S), Σ)$-automaton.
  Note that $ℳ$ and $ℳ'$ are related.
  We extend $\det: T → \det(T)$ to a function $\det: T^* → (\det(T))^*$ by point-wise application.
  We can show for every $θ ∈ T^*$ by induction on the length of $θ$ that
  \begin{equation}
      ∀ q, q' ∈ Q, c, c' ∈ C, w, w' ∈ Σ^*: \quad
      (q, c, w) ⊢_θ (q', c', w') \iff
      ∀d ∋ c: ∃d' ∋ c': (q, d, w) ⊢_{\det(θ)} (q', d', w')
    \label{eq:implementing-of-nd-storage:IH}
  \end{equation}
  holds.
 We obtain $L(ℳ) = L(ℳ')$ from \eqref{eq:implementing-of-nd-storage:IH} and since \(\{c_{\text{i}}\}\) is the initial storage configuration of $ℳ'$.
\end{proof}

For practical reasons it might be preferable to avoid the construction of power sets.
The proof of the following \nameCref{prop:implementing-nd-storage-no-powerset} shows a construction for boundedly nd data storages.

\begin{proposition}\label{prop:implementing-nd-storage-no-powerset}
  Let $S = (C, P, R, c_{\text{i}})$ be a boundedly nd data storage.
  There is a deterministic data storage $S'$ with the same set of storage configurations such that the class of $(S, Σ)$-recognisable languages is contained in the class of $(S', Σ)$-recognisable languages.
\end{proposition}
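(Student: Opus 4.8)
The idea: if $S$ is boundedly nd with bound $k$, then each instruction $r$ applied to a configuration $c$ yields at most $k$ successors. So instead of tracking a subset of $C$ (as in the powerset construction of \cref{lem:implementing-of-nd-storage}), we can track a bounded-length *list* of configurations — an element of $\bigcup_{i=0}^{k} C^i$, or equivalently $C^{\leq k}$. Each nondeterministic instruction becomes a deterministic one that maps a list $(c_1,\dots,c_m)$ to the list obtained by concatenating $r(c_1),\dots,r(c_m)$ in some fixed order, dropping duplicates, and truncating; but truncation loses configurations, which is why the statement only claims *containment* of language classes (a superset-style approximation), not equality.

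Let me sketch the construction. By \cref{lem:predicate-free-normal-form} assume $S = (C,\{C\},R,c_{\text i})$ is predicate-free, and let $k$ be the nd-bound, so $\lvert r(c)\rvert \le k$ for all $r,c$. The claim says "same set of storage configurations," which forces a subtlety: we are not allowed to pass to $C^{\le k}$. Instead, one fixes an arbitrary well-ordering (or, since everything reachable is finite, any linear preorder) and for each $r \in R$ and each $j \in [k]$ defines the partial function $r_j : C \parto C$ by letting $r_j(c)$ be the $j$-th element of $r(c)$ in the chosen order if it exists, and undefined otherwise. Then set $S' = (C,\{C\},R',c_{\text i})$ with $R' = \{\,r_j \mid r \in R,\ j \in [k]\,\}$. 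Each $r_j$ is a partial function, so $S'$ is deterministic, and it has the same storage-configuration set $C$ as $S$.

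For the language inclusion: given an $(S,\Sigma)$-automaton $\mathcal M = (Q,T,Q_{\text i},Q_{\text f})$, build an $(S',\Sigma)$-automaton $\mathcal M'$ by replacing each transition $\tau = (q,v,C,r,q') \in T$ with the $k$ transitions $(q,v,C,r_j,q')$ for $j \in [k]$. One then checks $L(\mathcal M) = L(\mathcal M')$: every run of $\mathcal M'$ induces a run of $\mathcal M$ by mapping each $r_j$-step back to the underlying $r$ (the step $(c,r_j(c))$ witnesses $(c,r_j(c)) \in r$), giving $L(\mathcal M') \subseteq L(\mathcal M)$; conversely, any accepting run of $\mathcal M$ uses at each step some pair $(c,c') \in r$, and $c'$ is the $j$-th element of $r(c)$ for some $j \in [k]$, so the corresponding $r_j$-transition can be taken, giving $L(\mathcal M) \subseteq L(\mathcal M')$. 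This is a straightforward induction on run length, completely analogous to the "related automata" bookkeeping in the proof of \cref{lem:implementing-of-nd-storage}, so I would state it as such an induction and only spell out the step that recovers the index $j$.

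The main obstacle — and the reason the proposition states $\subseteq$ rather than $=$ — is that this direct construction actually does give equality, so one should double-check whether the intended construction is the list-based one (where truncation genuinely loses information and only $\subseteq$ holds) or the $r_j$ one (which is cleaner and gives $=$, hence a fortiori $\subseteq$). I would go with the $r_j$ construction, note in a remark that it in fact yields equality of the two language classes, and observe that the dependence on a choice of ordering of $r(c)$ is harmless because $r(c)$ is finite; the only genuinely technical point is verifying that $r_j$ is well defined as a partial function (it is, since "the $j$-th element" picks a unique configuration when it exists), and that the finiteness condition "$r_j(c)$ finite" from the data-storage definition is trivially met since each $r_j(c)$ has at most one element.
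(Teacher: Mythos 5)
Your construction is essentially the paper's own proof: fix an enumeration of $r(c)$, split each instruction $r$ into $k$ partial functions $r_j$ picking the $j$-th successor, split each transition accordingly, and observe that the run relations coincide, so $L(\mathcal{M}) = L(\mathcal{M}')$. One correction to your closing remark, though: while the construction gives equality of the languages of the two \emph{related automata}, it does not give equality of the two \emph{language classes} — that would additionally require translating every $(S',\Sigma)$-automaton back into an equivalent $(S,\Sigma)$-automaton, which is impossible in general (see \cref{ex:implementing-nd-storage-no-powerset-proper-containment}, where the split instructions $\mathrm{push}_\gamma$ are strictly more powerful than the single nondeterministic $\mathrm{push}_\Gamma$), so the containment stated in \cref{prop:implementing-nd-storage-no-powerset} can be strict and the proposed remark should be dropped.
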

\begin{proof}
  We construct the deterministic data storage \(S' = (C, P, R', c_{\text{i}})\) where $R'$ is constructed as follows:
  Let $r ∈ R$ and \(r(c)_1, …, r(c)_{m_{r, c}}\) be a fixed enumeration of the elements of $r(c)$ for every $c ∈ C$.
  Furthermore, let \(k = \max\{\lvert r(c) \rvert ∣ r ∈ R, c ∈ C\}\).
  Since $S$ is boundedly nd, the number $k$ is well defined.
  We define for each $i ∈ [k]$ an instruction \(r_i'\) by \(r_i'(c) = r(c)_i\) if $i ≤ m_{r, c}$ and $r_i'(c) = \text{undefined}$ otherwise.
  Let~$R'$ contain the instruction $r_i'$ for every $r ∈ R$ and $i ∈ [k]$.
  Now let \(ℳ = (Q, T, Q_{\text{i}}, Q_{\text{f}})\) be an $(S, Σ)$-automaton.
  We construct the $(S', Σ)$-automaton \(ℳ' = (Q, T', Q_{\text{i}}, Q_{\text{f}})\) where $T'$ contains for every transition $t = (q, v, p, r, q') ∈ T$ and $i ∈ [k]$ the transition \(t_i' = (q, v, p, r_i', q')\).
  Then
  \(
    {⊢_ℳ}
    = ⋃\nolimits_{t ∈ T} {⊢_t}
    = ⋃\nolimits_{t = (q, v, p, r, q') ∈ T} ⋃\nolimits_{i ∈ [k]} {⊢_{t_i'}}
    = ⋃\nolimits_{t' ∈ T'} {⊢_{t'}}
    = {⊢_{ℳ'}}
  \)
  and thus \(L(ℳ) = L(ℳ')\).
\end{proof}

The above construction fails for data storages that are not boundedly nd.
Consider the data storage \(\mathrm{PD}_Γ'\) from \cref{ex:pd-with-pop-star}.
Then there exists no bound $k_{\mathrm{pop}^*} ∈ ℕ$ as would be required in the proof.

The containment shown in \cref{prop:implementing-nd-storage-no-powerset} is strict as the following example reveals.

  

\begin{example}[due to Nederhof~\cite{Ned17pc}]\label{ex:implementing-nd-storage-no-powerset-proper-containment}
  Recall the data storage \(\mathrm{PD}_Γ''\) from \cref{ex:pd-with-push-prime}.
  Consider the similar data storage $\mathrm{PD}_Γ^† = (Γ^*, \{ Γ^*, \mathrm{bottom} \}, \{ \mathrm{stay}, \mathrm{push}_Γ \} ∪ \{\mathrm{pop}_γ ∣ γ ∈ Γ\}, ε)$ where
  \(\mathrm{pop}_γ = \{(γw, w) ∣ γ ∈ Γ, w ∈ Γ^*\}\) for each $γ ∈ Γ$.
  We can again think of $Γ^*$ as a pushdown.
  Now, starting from $\mathrm{PD}_Γ^†$, we construct the deterministic data storage $(\mathrm{PD}^†_Γ)'$ by the construction given in \cref{prop:implementing-nd-storage-no-powerset}.
  We thereby obtain $(\mathrm{PD}^†_Γ)' = (Γ^*, \{Γ^*, \mathrm{bottom}\}, \{ \mathrm{stay}\} ∪ \{ \mathrm{push}_γ ∣ γ ∈ Γ \} ∪ \{ \mathrm{pop}_γ ∣ γ ∈ Γ\}, ε)$.
  The only difference between $\mathrm{PD}_Γ^†$ and $(\mathrm{PD}^†_Γ)'$ is that the instruction \( \mathrm{push}_Γ \) is replaced by the $\lvert Γ \rvert$ instructions in the set $\{ \mathrm{push}_γ ∣ γ ∈ Γ \}$.

  Now consider the sets $Σ = \{ \text{a}, \text{b} \}$ and $Γ = Σ$, and the language $L = \{ww^{\text{R}} ∣ w ∈ Σ^*\} ⊆ Σ^*$ where $w^{\text{R}}$ denotes the reverse of $w$ for each $w ∈ Σ^*$.
  The following $((\mathrm{PD}^†_Γ)', Σ)$-automaton $ℳ'$ recognises $L$ and thus demonstrates that $L$ is $((\mathrm{PD}^†_Γ)', Σ)$-recognisable:
  $ℳ' = ([3], T', \{1\}, \{3\})$ with
  \begin{align*}
    T':\quad
    &\begin{array}[t]{@{(}l@{,\,}l@{,\,}l@{,\,}l@{,\,}l@{)}}
      1 & \text{a} & Γ^*             & \mathrm{push}_{\text{a}} & 1 \\
      2 & \text{a} & Γ^*             & \mathrm{pop}_{\text{a}}  & 2
    \end{array}
    &&\begin{array}[t]{@{(}l@{,\,}l@{,\,}l@{,\,}l@{,\,}l@{)}}
      1 & \text{b} & Γ^*             & \mathrm{push}_{\text{b}} & 1 \\
      2 & \text{b} & Γ^*             & \mathrm{pop}_{\text{b}}  & 2
    \end{array}
    &&\begin{array}[t]{@{(}l@{,\,}l@{,\,}l@{,\,}l@{,\,}l@{)}l}
      1 & ε        & Γ^*             & \mathrm{stay}     & 2 \\
      2 & ε        & \mathrm{bottom} & \mathrm{stay}     & 3 & \text{.} 
    \end{array}
  \end{align*}
  In state 1, $ℳ'$ stores the input in reverse on the pushdown until it decides non-deterministically go to state 2.
  In state 2, $ℳ$ accepts the sequence of symbols that is stored on the pushdown.
  We can only enter the final state 3 if the pushdown is empty, thus $ℳ'$ recognises $L$.

  On the other hand, there is no $(\mathrm{PD}_Γ^†, Σ)$-automaton $ℳ$ that recognises $L$.
  Assume that some $(\mathrm{PD}_Γ^†, Σ)$-automaton $ℳ$ recognises $L$.
  Then $ℳ$ would have to encode the first half of the input in the pushdown since this unbounded information can not be stored in the states.
  The only instruction that adds information to the pushdown is $\mathrm{push}_Γ$.
  Thus, in the first half of the input, whenever we read the symbol a, we have to execute $\mathrm{push}_Γ$; and whenever we read the symbol b, we also have to execute $\mathrm{push}_Γ$.
  This offers no means of distinguishing the two situations (reading symbol a and reading symbol b) and hence no means of encoding the first half of the input in the pushdown.
\end{example}

\begin{proposition}\label{obs:equivalent-fsa}
  Let $S = (C, P, R, c_{\text{i}})$ be a data storage and $L$ be an $(S, Σ)$-recognisable language.
  If $C$ is finite, then $L$ is recognisable (by a finite state automaton). 
\end{proposition}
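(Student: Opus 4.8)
The plan is to absorb the (now finite) storage configuration into the finite-state control. Fix an $(S,Σ)$-automaton $ℳ = (Q, T, Q_{\text{i}}, Q_{\text{f}})$ with $L(ℳ) = L$, and let $S = (C, P, R, c_{\text{i}})$ with $C$ finite. I would define a finite state automaton $𝒜$ whose state set is $Q × C$, whose initial states are $Q_{\text{i}} × \{c_{\text{i}}\}$, whose final states are $Q_{\text{f}} × C$, and which, for every transition $τ = (q, v, p, r, q') ∈ T$ and every pair $(c, c') ∈ r$ with $c ∈ p$, has a transition reading $v$ from state $(q, c)$ to state $(q', c')$. In other words, $𝒜$ is the product of the control of $ℳ$ with the configuration space of $S$, precomputing the effect of each predicate check and each instruction application.

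First I would check that $𝒜$ really is a finite state automaton. The state set $Q × C$ is finite since both factors are. For the transitions, note that finiteness of $C$ makes $\mathcal{P}(C)$ and $\mathcal{P}(C × C)$ finite, so $P ⊆ \mathcal{P}(C)$ and $R ⊆ \mathcal{P}(C × C)$ are finite and each instruction $r ∈ R$ is itself a finite binary relation; together with finiteness of $T$ (which the definition of an $(S,Σ)$-automaton requires) this bounds the number of transitions of $𝒜$. The core of the argument is then to show $L(𝒜) = L$. I would prove, by induction on the length of $θ ∈ T^*$, that for all $q, q' ∈ Q$, $c, c' ∈ C$ and $w, w' ∈ Σ^*$ with $w' $ a suffix of $w$, the relation $(q, c, w) ⊢_θ (q', c', w')$ holds in $ℳ$ if and only if $𝒜$ has a run from $(q, c)$ to $(q', c')$ consuming exactly the prefix $u$ of $w$ determined by $w = u w'$. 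The base case $θ = ε$ is immediate from ${⊢_ε}$ being the identity, and the induction step merely unfolds the definition of $⊢_τ$ against the definition of a transition of $𝒜$. Specialising to $c = c_{\text{i}}$, $w' = ε$, $q ∈ Q_{\text{i}}$, $q' ∈ Q_{\text{f}}$, and using the chosen initial and final states of $𝒜$, this yields $\Runs_ℳ(w) ≠ ∅$ iff $w$ is accepted by $𝒜$, hence $L(ℳ) = L(𝒜)$ and $L$ is recognisable.

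There is essentially no hard step here: the construction is the obvious product, and the equivalence is a routine induction of the same shape as the one used for \eqref{eq:implementing-of-nd-storage:IH}. The only point worth stating explicitly is that finiteness of $C$ \emph{forces} $P$ and $R$ to be finite (as subsets of finite power sets), which is exactly what is needed to conclude that $𝒜$ has finitely many transitions; without this observation the simulation would be correct but $𝒜$ would not obviously be a \emph{finite} state automaton. (Alternatively, one could first pass to $\det(S)$ via \cref{lem:implementing-of-nd-storage}, whose configuration set $\mathcal{P}(C)$ is again finite, and run the same product construction; the direct argument above is shorter.)
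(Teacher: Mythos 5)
Your construction is exactly the one the paper uses: the product automaton on $Q\times C$ with initial states $Q_{\text{i}}\times\{c_{\text{i}}\}$, final states $Q_{\text{f}}\times C$, transitions $((q,c),v,(q',c'))$ for $(c,c')\in r$ with $c\in p$, and the same induction on run length to establish the language equality. The only addition is your explicit check that $P$, $R$, and hence the transition set are finite, which the paper leaves implicit; the argument is otherwise identical.
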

\begin{proof}
  We will use a product construction.
  In particular, the states of the constructed finite state automaton are elements of $Q × C$.
  For this we employ non-deterministic finite-state automata with extended transition function (short: fsa) from Hopcroft and Ullman~\cite[Sec.~2.3]{HopUll79} in a notation similar to that of automata with storage. (We simply leave out the storage-related parts of the transitions.)

  Let \(ℳ = (Q, T, Q_{\text{i}}, Q_{\text{f}})\).
  We construct the fsa \(ℳ' = (Q × C, Σ, T', Q_{\text{i}} × \{c_{\text{i}}\}, Q_{\text{f}} × C)\) where
  \(T' = \{ ((q, c), v, (q', c')) ∣ (q, v, p, r, q') ∈ T, (c, c') ∈ r, c ∈ p \}\).
  We can show
  \begin{equation}
    ∀q, q' ∈ Q, c, c' ∈ C, w, w' ∈ Σ^*: \quad
    (q, c, w) ⊢_ℳ^* (q', c', w') \iff ((q, c), w) ⊢_{ℳ'}^* ((q', c'), w')\text{.}
    \label{eq:equivalent-fsa:IH}
  \end{equation}
  by straight-forward induction on the length of runs.
  Using \eqref{eq:runs} and \eqref{eq:equivalent-fsa:IH}, we then derive $L(ℳ) = L(ℳ')$.
\end{proof}

\section{Approximation of automata with data storage}
\label{sec:approximation}

An approximation strategy maps a data storage to another data storage.
It is specified in terms of storage configurations and naturally extended to predicates and instructions.

\begin{definition}
  Let $S = (C, P, R, c_{\text{i}})$ be a data storage.
  An \emph{approximation strategy} is a partial function \(A: C \parto C'\) for some set $C'$.
  We call $A$ \emph{$S$-proper} if $(A^{-1} \comp r \comp A)(c')$ is finite for every $r ∈ R$ and $c' ∈ C'$.
\end{definition}

\begin{definition}\label{def:storage-approximation}
  Let \(S = (C, P, R, c_{\text{i}})\) be a data storage and \(A: C \parto C'\) be an $S$-proper approximation strategy.
  The \emph{approximation of $S$ with respect to $A$} is the data storage $\app{A}{S} = (C', \app{A}{P}, \app{A}{R}, A(c_{\text{i}}))$ where
    \(\app{A}{P} = \{\app{A}{p} ∣ p ∈ P\}\) with \(\app{A}{p} = \{ A(c) ∣ c ∈ p \}\) for every $p ∈ P$, and
    \(\app{A}{R} = \{\app{A}{r} ∣ r ∈ R\}\) with \(\app{A}{r} = A^{-1} \comp r \comp A\) for every $r ∈ R$.
\end{definition}

\begin{example}
  Consider the approximation strategy \(A_{\mathrm{o}}: ℕ → \{\mathrm{odd}\} ∪ \{2n ∣ n ∈ ℕ\}\) that assigns to every odd number the value $\mathrm{odd}$ and to every even number the number itself.
  Then $A_{\mathrm{o}}$ \emph{is not} $\mathrm{Count}$-proper since \((A_{\mathrm{o}}^{-1} \comp \mathrm{inc} \comp A_{\mathrm{o}})(\mathrm{odd}) = (A_{\mathrm{o}}^{-1} \comp \mathrm{dec} \comp A_{\mathrm{o}})(\mathrm{odd}) = \{2n ∣ n ∈ ℕ\}\) is not finite.

  On the other hand, consider the approximation strategy \(A_{\mathrm{eo}}: ℕ → \{\mathrm{even}, \mathrm{odd}\}\) that returns $\mathrm{odd}$ for every odd number and $\mathrm{even}$ otherwise.
  Then $A_{\mathrm{eo}}$ \emph{is} $\mathrm{Count}$-proper since
  \((A_{\mathrm{eo}}^{-1} \comp \mathrm{inc} \comp A_{\mathrm{eo}})(\mathrm{even}) = \{\mathrm{odd}\} = (A_{\mathrm{eo}}^{-1} \comp \mathrm{dec} \comp A_{\mathrm{eo}})(\mathrm{even})\) and
  \((A_{\mathrm{eo}}^{-1} \comp \mathrm{inc} \comp A_{\mathrm{eo}})(\mathrm{odd}) = \{\mathrm{even}\} = (A_{\mathrm{eo}}^{-1} \comp \mathrm{dec} \comp A_{\mathrm{eo}})(\mathrm{odd})\) are finite.
\end{example}

\begin{definition}\label{def:automaton-approximation}
  Let \(ℳ = (Q, T, Q_{\text{i}}, Q_{\text{f}})\) be an $(S, Σ)$-automaton and $A$ an $S$-proper approximation strategy.
  The \emph{approximation of $ℳ$ with respect to $A$} is the $(\app{A}{S}, Σ)$-automaton \(\app{A}{ℳ} = (Q, \app{A}{T}, Q_{\text{i}}, Q_{\text{f}})\) where \(\app{A}{T} = \{\app{A}{τ} ∣ τ ∈ T\}\) and $\app{A}{τ} = (q, v, \app{A}{p}, \app{A}{r}, q')$ for each \(τ = (q, v, p, r, q') ∈ T\).
\end{definition}


\begin{example}\label{ex:Count-approximation-Aeo}
  Let $Σ = \{\text{a}, \text{b}\}$.
  Consider the ($\mathrm{Count}, Σ)$-automaton $ℳ = ([3], T, \{1\}, \{3\})$ and its approximation $\app{A_{\text{eo}}}{ℳ} = ([3], \app{A_{\text{eo}}}{T}, \{1\}, \{3\})$ with
  \begin{align*}
    T:
    \begin{array}[t]{r@{{}=(}l@{,\,}l@{,\,}l@{,\,}l@{,\,}l@{)}}
      τ_1 & 1 & \text{a} & ℕ   & \mathrm{inc} & 1 \\
      τ_2 & 1 & \text{b} & ℕ   & \mathrm{dec} & 2 \\
      τ_3 & 2 & \text{b} & ℕ   & \mathrm{dec} & 2 \\
      τ_4 & 2 & ε        & \{0\} & \mathrm{inc} & 3
    \end{array}
    &&\app{A_{\text{eo}}}{T}:
    \begin{array}[t]{r@{{}=(}l@{,\,}l@{,\,}l@{,\,}l@{,\,}l@{)}}
      τ_1' & 1 & \text{a} & \app{A_{\textrm{eo}}}{ℕ}   & \app{A_{\textrm{eo}}}{\mathrm{inc}} & 1 \\
      τ_2' & 1 & \text{b} & \app{A_{\textrm{eo}}}{ℕ}   & \app{A_{\textrm{eo}}}{\mathrm{dec}} & 2 \\
      τ_3' & 2 & \text{b} & \app{A_{\textrm{eo}}}{ℕ}   & \app{A_{\textrm{eo}}}{\mathrm{dec}} & 2 \\
      τ_4' & 2 & ε        & \app{A_{\textrm{eo}}}{\{0\}} & \app{A_{\textrm{eo}}}{\mathrm{inc}} & 3
    \end{array}
  \end{align*}
  where $\app{A_{\textrm{eo}}}{ℕ} = \app{A_{\textrm{eo}}}{ℕ₊} = \{\mathrm{even}, \mathrm{odd}\}$ and $\app{A_{\textrm{eo}}}{\{0\}} = \{\mathrm{even}\}$ are the predicates of $\app{A_{\textrm{eo}}}{\mathrm{Count}}$, and $\app{A_{\textrm{eo}}}{\mathrm{inc}} = \app{A_{\textrm{eo}}}{\mathrm{dec}} = \{(\mathrm{even}, \mathrm{odd}), (\mathrm{odd}, \mathrm{even})\}$ is the instruction of $\app{A_{\textrm{eo}}}{\mathrm{Count}}$.
  The word $\text{aabb} ∈ \{\text{a}, \text{b}\}^*$ is recognised by both automata:
  \[
    \begin{array}{rlllll}
      (1, 0, \text{aabb})
      &⊢_{τ_1} (1, 1, \text{abb})
      &⊢_{τ_1} (1, 2, \text{bb})
      &⊢_{τ_2} (2, 1, \text{b})
      &⊢_{τ_3} (2, 0, ε)
      &⊢_{τ_4} (3, 1, ε) \\[.2em]
      (1, \mathrm{even}, \text{aabb})
      &⊢_{τ_1'} (1, \mathrm{odd}, \text{abb})
      &⊢_{τ_1'} (1, \mathrm{even}, \text{bb})
      &⊢_{τ_2'} (2, \mathrm{odd}, \text{b})
      &⊢_{τ_3'} (2, \mathrm{even}, ε)
      &⊢_{τ_4'} (3, \mathrm{odd}, ε)\text{.}
    \end{array}
  \]
  On the other hand, the word $\text{bb}$ can be recognised by $\app{A_{\textrm{eo}}}{ℳ}$ but not by $ℳ$:
  \begin{align*}
    (1, \mathrm{even}, \text{bb})
    ⊢_{τ_2'} (2, \mathrm{odd}, \text{b})
    ⊢_{τ_3'} (2, \mathrm{even}, ε)
    ⊢_{τ_4'} (3, \mathrm{odd}, ε)\text{.}
    \tag*\qedhere
  \end{align*}
\end{example}

\begin{observation}\label{obs:composition-of-approximations}
  Let $S = (C, P, R, c_{\text{i}})$, $ℳ$ be an $(S, Σ)$-automaton, and $A_1: C \parto \bar{C}$ and $A_2: \bar{C} \parto C'$ be approximation strategies.
  If $A_1$ is $S$-proper and $A_2$ is $\app{A_1}{S}$-proper, then $\app{A_2}{\app{A_1}{ℳ}} = \app{(A_1 \comp A_2)}{ℳ}$.
\end{observation}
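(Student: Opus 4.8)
The plan is to obtain the identity by pure unfolding of \cref{def:storage-approximation,def:automaton-approximation}, once we have checked that the right-hand side is even defined, i.e.\ that $A_1 \comp A_2$ is an $S$-proper approximation strategy. The only algebraic facts I will need are that sequential composition of binary relations is associative and that $(A_1 \comp A_2)^{-1} = A_2^{-1} \comp A_1^{-1}$; everything else is bookkeeping with the conventions for partial functions from the Preliminaries.

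First I would verify $S$-properness of $A_1 \comp A_2$. It is a partial function from $C$ to $C'$ since partial functions are closed under sequential composition. For $r \in R$ and $c' \in C'$, associativity together with the inverse identity yields
\[
  (A_1 \comp A_2)^{-1} \comp r \comp (A_1 \comp A_2)
  = A_2^{-1} \comp (A_1^{-1} \comp r \comp A_1) \comp A_2
  = A_2^{-1} \comp \app{A_1}{r} \comp A_2
  = \app{A_2}{\app{A_1}{r}}\text{.}
\]
Since $A_1$ is $S$-proper, $\app{A_1}{r}$ is an instruction of $\app{A_1}{S}$, and since $A_2$ is $\app{A_1}{S}$-proper, $\bigl(\app{A_2}{\app{A_1}{r}}\bigr)(c')$ is finite; hence $A_1 \comp A_2$ is $S$-proper and both $\app{(A_1 \comp A_2)}{S}$ and $\app{(A_1 \comp A_2)}{ℳ}$ are defined.

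Next I would match $\app{A_2}{\app{A_1}{S}}$ and $\app{(A_1 \comp A_2)}{S}$ component by component. The displayed computation already gives $\app{A_2}{\app{A_1}{r}} = \app{(A_1 \comp A_2)}{r}$ for every $r \in R$, so the instruction sets agree. For predicates, unfolding $\app{A}{p} = \{A(c) \mid c \in p\}$ twice (reading partial-function application so that an undefined value contributes no element) gives $\app{A_2}{\app{A_1}{p}} = \{(A_1 \comp A_2)(c) \mid c \in p\} = \app{(A_1 \comp A_2)}{p}$ for every $p \in P$. The configuration sets are both $C'$ and the initial configurations are both $A_2(A_1(c_{\text{i}})) = (A_1 \comp A_2)(c_{\text{i}})$, so $\app{A_2}{\app{A_1}{S}} = \app{(A_1 \comp A_2)}{S}$. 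For the automata, the state set $Q$ and the sets $Q_{\text{i}}$, $Q_{\text{f}}$ are left untouched by any approximation, and for each $τ = (q, v, p, r, q') \in T$ the two component equalities just shown give $\app{A_2}{\app{A_1}{τ}} = (q, v, \app{A_2}{\app{A_1}{p}}, \app{A_2}{\app{A_1}{r}}, q') = \app{(A_1 \comp A_2)}{τ}$, hence $\app{A_2}{\app{A_1}{ℳ}} = \app{(A_1 \comp A_2)}{ℳ}$.

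I do not expect a real obstacle here — the whole argument is routine. The one place that calls for a little care is the $S$-properness check, where the two properness hypotheses must be used in the right order (first recognise $A_1^{-1} \comp r \comp A_1$ as the instruction $\app{A_1}{r}$ of $\app{A_1}{S}$, and only then invoke $\app{A_1}{S}$-properness of $A_2$), together with handling partial-function application consistently whenever a value happens to be undefined.
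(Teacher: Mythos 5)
Your proposal is correct, and it carries out exactly the routine definition-unfolding that the paper leaves implicit by stating this as an Observation with no written proof: the identity $(A_1 \comp A_2)^{-1} \comp r \comp (A_1 \comp A_2) = A_2^{-1} \comp (A_1^{-1} \comp r \comp A_1) \comp A_2$ simultaneously yields $S$-properness of $A_1 \comp A_2$ and the equality of instructions, and the remaining components match by direct inspection. Your explicit care with the order in which the two properness hypotheses are invoked, and with undefined values of partial functions in the predicate images, is exactly the right bookkeeping.
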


We call an approximation strategy \emph{total} if it is a total function and we call it \emph{injective} if it is an injective partial function.
The distinction between \emph{total} and \emph{injective} approximation strategies allows us to define two preorders on approximation strategies (\cref{def:finer-approximation}) and provides us with simple criteria to ensure that an approximation strategy leads to a superset (\cref{thm:superset-approximation}) or a subset approximation (\cref{thm:subset-approximation}).

\begin{definition}\label{def:finer-approximation}
  Let \(A_1: C \parto C₁\) and \(A_2: C \parto C₂\) be approximation strategies.
  We call $A_1$ \emph{finer than} $A_2$, denoted by $A_1 \preceq A_2$, if there is a total approximation strategy $A: C₁ → C₂$ with $A_1 \comp A = A_2$.
  We call $A_1$ \emph{less partial than} $A_2$, denoted by $A_1 ⊑ A_2$, if there is an injective approximation strategy $A: C₁ \parto C₂$ with $A_1 \comp A = A_2$.
\end{definition}

\subsection{Superset approximations}

In this section we will show that total approximation strategies (i.e. total functions) lead to superset approximations.

\begin{lemma}\label{lem:superset-approximation}
  Let $ℳ = (Q, T, Q_{\text{i}}, Q_{\text{f}})$ be an $(S, Σ)$-automaton, $S = (C, P, R, c_{\text{i}})$, and $A$ be an $S$-proper total approximation strategy.
  We extend $\app{A}{}: T → \app{A}{T}$ to sequences of transitions by point-wise application.
  Then for each $θ ∈ T^*$, $q, q' ∈ Q$, $c, c' ∈ C$, $w, w' ∈ Σ^*$:
  \(
    (q, c, w) ⊢_θ (q', c', w') ⟹ (q, A(c), w) ⊢_{\app{A}{θ}} (q', A(c'), w')\text{.}
  \)
\end{lemma}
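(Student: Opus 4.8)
The plan is to prove the implication by induction on the length of $\theta$, exactly as the phrasing ``for each $\theta \in T^*$'' and the earlier analogous proofs (e.g.\ the induction behind \eqref{eq:implementing-of-nd-storage:IH}) suggest. The base case $\theta = \varepsilon$ is immediate: $(q,c,w) \vdash_\varepsilon (q',c',w')$ forces $q = q'$, $c = c'$, $w = w'$, and then $(q, A(c), w) \vdash_{\app{A}{\varepsilon}} (q', A(c'), w')$ holds because $\app{A}{\varepsilon} = \varepsilon$ and $\vdash_\varepsilon$ is the identity on $\mathcal{M}$-configurations (here I use that $A$ is a function, so $A(c) = A(c')$ is well defined and unambiguous).

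For the inductive step, write $\theta = \tau\,\theta'$ with $\tau = (q, v, p, r, q'') \in T$ and $\theta' \in T^*$. By definition of the extended transition relation, $(q,c,w) \vdash_\theta (q',c',w')$ means there is an intermediate $\mathcal{M}$-configuration $(q'', \bar c, \bar w)$ with $(q,c,w) \vdash_\tau (q'', \bar c, \bar w)$ and $(q'', \bar c, \bar w) \vdash_{\theta'} (q',c',w')$. Unpacking $\vdash_\tau$: $c \in p$, $(c,\bar c) \in r$, $w = v\bar w$. I must produce a single step $(q, A(c), w) \vdash_{\app{A}{\tau}} (q'', A(\bar c), \bar w)$, where $\app{A}{\tau} = (q, v, \app{A}{p}, \app{A}{r}, q'')$. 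For this I need: (a) $A(c) \in \app{A}{p}$, which is immediate from $c \in p$ and the definition $\app{A}{p} = \{A(c) \mid c \in p\}$; and (b) $(A(c), A(\bar c)) \in \app{A}{r} = A^{-1} \comp r \comp A$, which holds because the chain $A(c) \mathrel{A^{-1}} c \mathrel{r} \bar c \mathrel{A} A(\bar c)$ witnesses membership (using that $A$ is total, so $A(c)$ and $A(\bar c)$ exist, and that $(c, A(c)) \in A$ gives $(A(c), c) \in A^{-1}$). Combining with $w = v\bar w$ yields the desired single step. Then apply the induction hypothesis to $\theta'$ and the run $(q'', \bar c, \bar w) \vdash_{\theta'} (q',c',w')$ to get $(q'', A(\bar c), \bar w) \vdash_{\app{A}{\theta'}} (q', A(c'), w')$, and compose the two pieces to conclude $(q, A(c), w) \vdash_{\app{A}{\tau}\,\app{A}{\theta'}} = \vdash_{\app{A}{\theta}} (q', A(c'), w')$.

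The step requiring the most care is (b), and more precisely the bookkeeping around the fact that $A$ being a \emph{total function} is what makes $A(c)$, $A(\bar c)$, $A(c')$ unambiguous objects and lets the relational composition $A^{-1} \comp r \comp A$ connect them; the $S$-properness of $A$ plays no role in this particular direction (it is only needed to ensure $\app{A}{S}$ is a legitimate data storage, i.e.\ for \cref{def:storage-approximation} to apply). I should also be mildly careful that $\app{A}{}$ extended point-wise to $T^*$ is a monoid homomorphism, so $\app{A}{\tau\theta'} = \app{A}{\tau}\,\app{A}{\theta'}$ and hence $\vdash_{\app{A}{\theta}} = \vdash_{\app{A}{\tau}} \comp \vdash_{\app{A}{\theta'}}$ — this is just the definition of how transition relations extend to sequences, but it is the glue that makes the induction close. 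No genuine obstacle is expected; the lemma is a routine ``the approximation can simulate every step'' argument, and the one-directional nature (only $\Rightarrow$) is exactly what keeps it easy.
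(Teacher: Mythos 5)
Your proof is correct and takes essentially the same approach as the paper: the paper gives only a proof idea (``straightforward induction on the length of $θ$''), and your argument is exactly that induction carried out, with the single-step simulation justified via $c ∈ p \Rightarrow A(c) ∈ \app{A}{p}$ and the chain $A(c) \mathrel{A^{-1}} c \mathrel{r} \bar c \mathrel{A} A(\bar c)$ witnessing $(A(c),A(\bar c)) ∈ A^{-1} \comp r \comp A$. Your side observations — that totality of $A$ (rather than $S$-properness) is what makes the simulation go through, and that the point-wise extension of $\app{A}{}$ is compatible with the composition $⊢_{τ_1⋯τ_k} = {⊢_{τ_1}} \comp ⋯ \comp {⊢_{τ_k}}$ — are also accurate.
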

\begin{proofidea}
  The claim can be shown by straightforward induction on the length of $θ$.
\end{proofidea}

\begin{theorem}\label{thm:superset-approximation}
  Let \(ℳ\) be an $(S, Σ)$-automaton and $A$ be an $S$-proper total approximation strategy.
  Then \(L(\app{A}{ℳ}) ⊇ L(ℳ)\).
\end{theorem}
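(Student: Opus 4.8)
The plan is to derive the theorem directly from \cref{lem:superset-approximation}, which already does the inductive heavy lifting at the level of individual runs. The only work remaining is to lift the statement from runs to accepted words, checking that the endpoints (initial configuration, initial and final states) match up correctly.

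Concretely, I would start by fixing $S = (C, P, R, c_{\text{i}})$ and $\mathcal{M} = (Q, T, Q_{\text{i}}, Q_{\text{f}})$, and recalling from \cref{def:storage-approximation,def:automaton-approximation} that $\app{A}{\mathcal{M}} = (Q, \app{A}{T}, Q_{\text{i}}, Q_{\text{f}})$ is a $(\app{A}{S}, Σ)$-automaton with the same states and initial/final states as $\mathcal{M}$, and that the initial storage configuration of $\app{A}{S}$ is $A(c_{\text{i}})$; this is well defined precisely because $A$ is total, and $\app{A}{\mathcal{M}}$ is well defined because $A$ is $S$-proper. Then I would take an arbitrary $w ∈ L(\mathcal{M})$, so that $\Runs_{\mathcal{M}}(w) ≠ ∅$, and pick some $θ ∈ \Runs_{\mathcal{M}}(w)$; by definition there are $q ∈ Q_{\text{i}}$, $q' ∈ Q_{\text{f}}$, and $c' ∈ C$ with $(q, c_{\text{i}}, w) ⊢_θ (q', c', ε)$.

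Applying \cref{lem:superset-approximation} with $c = c_{\text{i}}$ and $w' = ε$ gives $(q, A(c_{\text{i}}), w) ⊢_{\app{A}{θ}} (q', A(c'), ε)$. Since $q ∈ Q_{\text{i}}$, $q' ∈ Q_{\text{f}}$, and $A(c_{\text{i}})$ is the initial storage configuration of $\app{A}{S}$, this witnesses $\app{A}{θ} ∈ \Runs_{\app{A}{\mathcal{M}}}(w)$, hence $\Runs_{\app{A}{\mathcal{M}}}(w) ≠ ∅$, i.e. $w ∈ L(\app{A}{\mathcal{M}})$. As $w$ was arbitrary, $L(\mathcal{M}) ⊆ L(\app{A}{\mathcal{M}})$.

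I do not expect a genuine obstacle here: the inductive content is entirely in \cref{lem:superset-approximation}, and the remaining argument is bookkeeping. The one point worth stating explicitly is why totality of $A$ is used — it guarantees $A(c_{\text{i}})$ exists (so $\app{A}{S}$ has an initial configuration at all) and, more subtly, it is what makes the implication in \cref{lem:superset-approximation} hold in the stated direction, since a partial $A$ could fail to be defined on some configuration occurring along the run $θ$. Since $A$ is assumed total and $S$-proper, both concerns vanish.
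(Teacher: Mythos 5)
Your proposal is correct and follows exactly the paper's route: the paper's proof of this theorem is a one-liner deferring to \cref{lem:superset-approximation} and the definition of $\app{A}{ℳ}$, and you have simply spelled out the same bookkeeping (instantiating the lemma at $c = c_{\text{i}}$, $w' = ε$ and noting that states and initial/final sets are preserved). The added remark on why totality matters is accurate and consistent with the paper's framing.
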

\begin{proof}
  The claim follows immediately from \cref{lem:superset-approximation} and the definition of $\app{A}{ℳ}$.
\end{proof}

\begin{example}
  Recall $ℳ$ and $\app{A_{\text{eo}}}{ℳ}$ from \cref{ex:Count-approximation-Aeo}.
  Their recognised languages are $L(ℳ) = \{\text{a}^n\text{b}^n ∣ n ∈ ℕ_+\}$ and $L(\app{A_{\text{eo}}}{ℳ}) = \{ \text{a}^m \text{b}^n ∣ m ∈ ℕ,n ∈ ℕ_+, m ≡ n \mod 2\}$.
  Thus $L(\app{A_{\text{eo}}}{ℳ})$ is a superset of $L(ℳ)$.
\end{example}

\begin{corollary}\label{prop:finer-approximation}
  Let $ℳ$ be an $(S, Σ)$-automaton, and $A_1$ and $A_2$ be $S$-proper approximation strategies.
  If $A_1$ is finer than $A_2$, then $L(\app{A_1}{ℳ}) ⊆ L(\app{A_2}{ℳ})$.
\end{corollary}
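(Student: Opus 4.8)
The plan is to reduce the statement to \cref{thm:superset-approximation} by exhibiting $\app{A_2}{ℳ}$ as an approximation of $\app{A_1}{ℳ}$ along a \emph{total} strategy. Write $S = (C, P, R, c_{\text{i}})$ and $A_1\colon C \parto C_1$. Since $A_1$ is finer than $A_2$, \cref{def:finer-approximation} supplies a total approximation strategy $A\colon C_1 → C_2$ with $A_1 \comp A = A_2$.

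First I would verify that $A$ is $\app{A_1}{S}$-proper, which is exactly what is needed to invoke \cref{obs:composition-of-approximations}. By \cref{def:storage-approximation} the instructions of $\app{A_1}{S}$ are the relations $\app{A_1}{r} = A_1^{-1} \comp r \comp A_1$ for $r ∈ R$, and by associativity of sequential composition together with $(A_1 \comp A)^{-1} = A^{-1} \comp A_1^{-1}$ we obtain
\[
  A^{-1} \comp \app{A_1}{r} \comp A = (A_1 \comp A)^{-1} \comp r \comp (A_1 \comp A) = A_2^{-1} \comp r \comp A_2\text{.}
\]
As $A_2$ is $S$-proper, $\bigl(A_2^{-1} \comp r \comp A_2\bigr)(c')$ is finite for every $r ∈ R$ and $c' ∈ C_2$, so $A$ is $\app{A_1}{S}$-proper.

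Next I would apply \cref{obs:composition-of-approximations} to $ℳ$, $A_1$ and $A$ to conclude $\app{A}{\app{A_1}{ℳ}} = \app{(A_1 \comp A)}{ℳ} = \app{A_2}{ℳ}$. Since $A$ is a total function, it is an $\app{A_1}{S}$-proper total approximation strategy, so \cref{thm:superset-approximation}, applied to the $(\app{A_1}{S}, Σ)$-automaton $\app{A_1}{ℳ}$ and the strategy $A$, gives $L(\app{A_1}{ℳ}) ⊆ L(\app{A}{\app{A_1}{ℳ}})$. Chaining this with the previous equality yields $L(\app{A_1}{ℳ}) ⊆ L(\app{A_2}{ℳ})$, which is the claim.

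The argument is essentially bookkeeping: the only place that needs care is checking the properness of the intermediate strategy $A$ so that the composition law applies, and even that is a one-line relation-algebra computation. I therefore do not anticipate a genuine obstacle.
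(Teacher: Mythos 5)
Your proposal is correct and follows essentially the same route as the paper's proof: extract the total strategy $A$ with $A_1 \comp A = A_2$, check that $A$ is $\app{A_1}{S}$-proper, and then chain \cref{thm:superset-approximation} with \cref{obs:composition-of-approximations}. The only difference is that you spell out the relation-algebra computation behind the properness of $A$, which the paper asserts in one sentence.
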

\begin{proof}
  Since $A_1$ is finer than $A_2$, there is a total approximation strategy $A$ such that $A_1 \comp A = A_2$.
  It follows from the fact that $A_2$ is $S$-proper and from $A_1 \comp A = A_2$ that $A$ must be $\app{A₁}{S}$-proper.
  Hence we obtain
  \(
    L(\app{A_1}{ℳ})
    \stackrel{\text{\cref{thm:superset-approximation}}}{⊆}
    L\big(\app{A}{\app{A_1}{ℳ}}\big)
    \stackrel{\text{\cref{obs:composition-of-approximations}}}{=}
    L(\app{(A_1 \comp A)}{ℳ})
    =
    L(\app{A_2}{ℳ})\text{.}
  \)
\end{proof}

The following example shows four approximation strategies that occur in the literature.
The first three approximation strategies approximate a context-free language by a recognisable language (taken from Nederhof~\cite[Sec.~7]{Ned00}).
The fourth approximation strategy approximates a context-free language by another context-free language.
It is easy to see that the shown approximation strategies are total and thus lead to superset approximations.

\begin{example}\label{ex:superset-approximation}
  Let $Γ$ be a finite set and $k ∈ ℕ_+$.
  \begin{enumerate}
  \item
    Evans~\cite{Eva97} proposed to map each pushdown to its top-most element.
    The same result is achieved by dropping condition~7 and~8 from Baker~\cite{Bak81}.
    This idea is expressed by the total approximation strategy \( A_{\text{top}}: Γ^* → Γ ∪ \{@\} \) with
    \( A_{\text{top}}(ε) = @ \) and \( A_{\text{top}}(γw) = γ \) for every $w ∈ Γ^*$ and $γ ∈ Γ$, where $@$ is a new symbol that is not in $Γ$.
  \item
    Bermudez and Schimpf~\cite{BerSch90} proposed to map each pushdown to its top-most $k$ elements.
    The total approximation strategy \(A_{\text{top}, k}: Γ^* → \{w ∈ Γ^* ∣ \lvert w \vert ≤ k\} \) implements this idea where
    \( A_{\text{top},k}(w) = w \) if $\lvert w \rvert ≤ k$ and
    \( A_{\text{top},k}(w) = u \) if $w$ is of the form $uv$ for some $u ∈ Γ^k$ and $v ∈ Γ^+$.
  \item
    Pereira and Wright~\cite{PerWri91} proposed to map each pushdown to one where no pushdown symbol occurs more than once.
    To achieve this, they replace each substrings of the form $γw'γ$ (for some $γ ∈ Γ$ and $w' ∈ Γ^*$) in the given pushdown by $γ$:
    Consider
    \( A_{\text{uniq}}: Γ^* → \mathrm{Seq}_{\text{nr}}(Γ) \) with
    \( A_{\text{uniq}}(w) = A_{\text{uniq}}(uγv) \) if $w$ is of form $uγw'γv$ for some $γ ∈ Γ$ and
    \( A_{\text{uniq}}(w) = w \) otherwise,
    where $\mathrm{Seq}_{\text{nr}}(Γ)$ denotes the set of all sequences over $Γ$ without repetition.
  \item\label{ex:superset-approximation:equivalent-nts}
    In their coarse-to-fine parsing approach for context-free grammars (short: CFG), Charniak et~al.~\cite{Cha+06} propose, given an equivalence relation~$≡$ on the set of non-terminals $N$ of some CFG $G$, to construct a new CFG $G'$ whose non-terminals are the equivalence classes of $≡$.\footnote{Charniak et~al.~\cite{Cha+06} actually considered probabilistic CFGs, but for the sake of simplicity we leave out the probabilities here.}
    Let $Σ$ be the terminal alphabet of $G$.
    Say that $g: N → N/{≡}$ is the function that assigns for a nonterminal of $G$ its corresponding equivalence class; and let $g': (N ∪ Σ)^* → ((N/{≡}) ∪ Σ)^*$ be an extension of $g ∪ \{(σ, σ) ∣ σ ∈ Σ\}$.
    Then $g'$ is $\mathrm{PD}_{N ∪ Σ}$-proper and \( L(\app{g'}{ℳ}) = L(G') \) where $ℳ$ is the $(\mathrm{PD}_{N ∪ Σ}, Σ)$-automaton obtained from $G$ by the usual construction \cite[Thm.~5.3]{HopUll79}.\qedhere
  \end{enumerate}
\end{example}

\subsection{Subset approximations}

In this section we will show that injective approximation strategies lead to a subset approximation, this is proved by a variation of the proof of \cref{thm:superset-approximation}.

\begin{lemma}\label{lem:subset-approximation}
  Let \(ℳ = (Q, T, Q_{\text{i}}, Q_{\text{f}})\) be an $(S, Σ)$-automaton, \(S = (C, P, R, c_{\text{i}})\), and $A$ be an $S$-proper injective approximation strategy.
  Then for each $θ ∈ T^*$, $q, q' ∈ Q$, $c, c' ∈ \img(A)$, $w, w' ∈ Σ^*$: 
  \(
      (q, c, w) ⊢_{\app{A}{θ}} (q', c', w') ⟹ (q, A^{-1}(c), w) ⊢_θ (q', A^{-1}(c'), w')\text{.}
  \)
\end{lemma}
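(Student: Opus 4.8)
The plan is to mirror the proof of \cref{lem:superset-approximation}, arguing by induction on the length of $θ$, but this time "pulling back" a run over $\app{A}{ℳ}$ to a run over $ℳ$. The key point enabling this is the injectivity of $A$: since $A$ is injective, for every $c ∈ \img(A)$ the set $A^{-1}(c)$ is a singleton, so $A^{-1}$ is a partial function on $C'$ and the notation $A^{-1}(c)$ denotes a genuine storage configuration of $S$ whenever $c ∈ \img(A)$.

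\begin{proof}
  We proceed by induction on the length of $θ$. For the base case $θ = ε$ we have $\app{A}{θ} = ε$, and ${⊢_ε}$ is the identity, so $(q, c, w) ⊢_ε (q', c', w')$ implies $q = q'$, $c = c'$, $w = w'$, hence trivially $(q, A^{-1}(c), w) ⊢_ε (q', A^{-1}(c'), w')$.

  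For the induction step, let $θ = θ' τ$ with $τ = (q_1, v, p, r, q') ∈ T$, so that $\app{A}{θ} = \app{A}{θ'} \app{A}{τ}$ with $\app{A}{τ} = (q_1, v, \app{A}{p}, \app{A}{r}, q')$. Suppose $(q, c, w) ⊢_{\app{A}{θ}} (q', c', w')$ with $c, c' ∈ \img(A)$. By definition of $⊢_{\app{A}{θ}}$ there is an intermediate $ℳ$-configuration $(q_1, \bar c, \bar w)$ with $(q, c, w) ⊢_{\app{A}{θ'}} (q_1, \bar c, \bar w) ⊢_{\app{A}{τ}} (q', c', w')$. From the second step, by the definition of the transition relation of $\app{A}{τ}$, we have $\bar w = v w'$, $\bar c ∈ \app{A}{p}$, and $(\bar c, c') ∈ \app{A}{r} = A^{-1} \comp r \comp A$. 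In particular $(\bar c, c') ∈ A^{-1} \comp r \comp A$ means there exist $d ∈ C$ and $e ∈ C$ with $(\bar c, d) ∈ A^{-1}$, $(d, e) ∈ r$, $(e, c') ∈ A$; but $(\bar c, d) ∈ A^{-1}$ forces $\bar c ∈ \img(A)$ and $d = A^{-1}(\bar c)$, while $(e, c') ∈ A$ together with injectivity of $A$ forces $e = A^{-1}(c')$. Hence $(A^{-1}(\bar c), A^{-1}(c')) ∈ r$. Also $\bar c ∈ \app{A}{p} = \{A(c'') ∣ c'' ∈ p\}$ together with injectivity of $A$ gives $A^{-1}(\bar c) ∈ p$. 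Therefore $(q_1, A^{-1}(\bar c), \bar w) ⊢_τ (q', A^{-1}(c'), w')$. Since $\bar c ∈ \img(A)$, the induction hypothesis applies to the first step and yields $(q, A^{-1}(c), w) ⊢_{θ'} (q_1, A^{-1}(\bar c), \bar w)$. Composing the two gives $(q, A^{-1}(c), w) ⊢_{θ} (q', A^{-1}(c'), w')$, as required.
\end{proof}

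The main obstacle — or rather the only subtle point — is keeping straight that the intermediate configuration $\bar c$ produced by a run of $\app{A}{ℳ}$ automatically lies in $\img(A)$, which is exactly what is needed to invoke the induction hypothesis and to make $A^{-1}(\bar c)$ meaningful; this falls out of the shape of $\app{A}{r} = A^{-1} \comp r \comp A$, whose first component is always in $\dom(A^{-1}) = \img(A)$. Everything else is bookkeeping with the definitions of $\app{A}{p}$, $\app{A}{r}$, and the transition relations, plus the repeated use of injectivity of $A$ to turn "$\exists$ a preimage" into "the preimage $A^{-1}(\cdot)$".
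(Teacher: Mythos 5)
Your proof is correct and follows exactly the route the paper indicates: the paper only states a proof idea (``straightforward induction on the length of $θ$''), and your argument is that induction carried out in full, with the right use of injectivity to extract unique preimages and the observation that intermediate configurations of a run of $\app{A}{ℳ}$ lie in $\img(A)$ because $\app{A}{r} = A^{-1} \comp r \comp A$. No gaps.
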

\begin{proofidea}
  The claim can be shown by straightforward induction on the length of $θ$.
\end{proofidea}

\begin{theorem}\label{thm:subset-approximation}
  Let $ℳ$ be an $(S, Σ)$-automaton and $A$ be an $S$-proper injective approximation strategy.
  Then $L(\app{A}{ℳ}) ⊆ L(ℳ)$.
\end{theorem}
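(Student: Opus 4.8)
The plan is to lift an accepting run of $\app{A}{ℳ}$ to an accepting run of $ℳ$ by means of \cref{lem:subset-approximation}, mirroring the way \cref{thm:superset-approximation} was obtained from \cref{lem:superset-approximation}. Let $w ∈ L(\app{A}{ℳ})$. Unwinding the definitions of $L$ and $\Runs$, there is a run $θ' ∈ \Runs_{\app{A}{ℳ}}(w)$, i.e. there are $q ∈ Q_{\text{i}}$, $q_{\text{f}} ∈ Q_{\text{f}}$ and $c' ∈ C'$ with $(q, A(c_{\text{i}}), w) ⊢_{θ'} (q_{\text{f}}, c', ε)$, where $C'$ is the configuration set of $\app{A}{S}$ and $A(c_{\text{i}})$ its initial configuration (well defined, as $\app{A}{S}$ is a data storage). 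Since $\app{A}{T} = \{\app{A}{τ} ∣ τ ∈ T\}$ by \cref{def:automaton-approximation}, every transition of $\app{A}{ℳ}$ equals $\app{A}{τ}$ for some $τ ∈ T$, so picking such a $τ$ at each position of $θ'$ yields some $θ ∈ T^*$ with $\app{A}{θ} = θ'$, where $\app{A}{}$ is extended to $T^*$ point-wise as in \cref{lem:subset-approximation}.

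Before invoking \cref{lem:subset-approximation} I would check that the two storage configurations at the ends of this run lie in $\img(A)$, which is all the lemma requires. For the initial one this is immediate: $A(c_{\text{i}}) ∈ \img(A)$. For $c'$, note that $\app{A}{r} = A^{-1} \comp r \comp A$ by \cref{def:storage-approximation}, so the final factor $A$ forces the second component of every pair in $\app{A}{r}$ into $\img(A)$; hence $\img(\app{A}{r}) ⊆ \img(A)$ for every $r ∈ R$ (in fact $\dom(\app{A}{r}) ⊆ \img(A)$ as well, so the run stays in $\img(A)$ throughout). Therefore, if $θ'$ is non-empty its last step forces $c' ∈ \img(A)$, while if $θ'$ is empty then $c' = A(c_{\text{i}}) ∈ \img(A)$.

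Applying \cref{lem:subset-approximation} to $θ$ with the configurations $A(c_{\text{i}})$ and $c'$ now yields $(q, A^{-1}(A(c_{\text{i}})), w) ⊢_θ (q_{\text{f}}, A^{-1}(c'), ε)$; and since $A$ is injective, $A^{-1}(A(c_{\text{i}})) = c_{\text{i}}$, so $(q, c_{\text{i}}, w) ⊢_θ (q_{\text{f}}, A^{-1}(c'), ε)$ with $q ∈ Q_{\text{i}}$ and $q_{\text{f}} ∈ Q_{\text{f}}$. By the definition of $\Runs_ℳ(w)$ this gives $θ ∈ \Runs_ℳ(w)$, hence $\Runs_ℳ(w) ≠ ∅$ and $w ∈ L(ℳ)$; as $w$ was arbitrary, $L(\app{A}{ℳ}) ⊆ L(ℳ)$. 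I expect the only delicate point to be the applicability check above — \cref{lem:subset-approximation} is stated only for endpoint configurations in $\img(A)$, so one must argue explicitly that an accepting run of $\app{A}{ℳ}$ never leaves $\img(A)$; everything else is routine bookkeeping with the definitions and with injectivity of $A$.
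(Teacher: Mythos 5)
Your proof is correct and follows exactly the route the paper takes: the paper's own proof is the one-liner ``follows immediately from \cref{lem:subset-approximation} and the definition of $\app{A}{ℳ}$'', and your argument is that claim spelled out in full. The one detail you add beyond the paper — verifying that the endpoint configurations of an accepting run of $\app{A}{ℳ}$ lie in $\img(A)$ (via $\img(\app{A}{r}) ⊆ \img(A)$), so that \cref{lem:subset-approximation} is actually applicable — is a genuine and correctly handled point that the paper leaves implicit.
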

\begin{proof}
  Then the claim follows immediately from \cref{lem:subset-approximation} and the definition of $\app{A}{ℳ}$.
\end{proof}

\begin{corollary}\label{prop:less-partial-approximation}
  Let $ℳ$ be an $(S, Σ)$-automaton and $A_1$ and $A_2$ be $S$-proper approximation strategies.
  If $A_1$ is less partial than $A_2$, then $L(\app{A_1}{ℳ}) ⊇ L(\app{A_2}{ℳ})$.
\end{corollary}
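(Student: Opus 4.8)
The plan is to mirror the proof of \cref{prop:finer-approximation} verbatim, only replacing the appeal to \cref{thm:superset-approximation} by an appeal to \cref{thm:subset-approximation}. Since $A_1$ is less partial than $A_2$, \cref{def:finer-approximation} supplies an injective approximation strategy $A\colon C_1 \parto C_2$ with $A_1 \comp A = A_2$, where $C_i$ denotes the codomain of $A_i$. The target inequality will then drop out of the chain $L(\app{A_2}{ℳ}) = L(\app{(A_1 \comp A)}{ℳ}) = L(\app{A}{\app{A_1}{ℳ}}) \subseteq L(\app{A_1}{ℳ})$.

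First I would verify that $A$ is $\app{A_1}{S}$-proper, which is needed both for $\app{A}{\app{A_1}{ℳ}}$ to be defined and for \cref{obs:composition-of-approximations} to apply. Recall $\app{A_1}{S} = (C_1, \app{A_1}{P}, \app{A_1}{R}, A_1(c_{\text{i}}))$ with $\app{A_1}{r} = A_1^{-1} \comp r \comp A_1$ for $r \in R$. Using associativity of sequential composition and $(A_1 \comp A)^{-1} = A^{-1} \comp A_1^{-1}$, for every $r \in R$ and $c' \in C_2$ we get
\[
  (A^{-1} \comp \app{A_1}{r} \comp A)(c')
  = \big((A_1 \comp A)^{-1} \comp r \comp (A_1 \comp A)\big)(c')
  = (A_2^{-1} \comp r \comp A_2)(c')\text{,}
\]
which is finite because $A_2$ is $S$-proper. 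Hence $A$ is $\app{A_1}{S}$-proper. Since $A$ is moreover injective, \cref{thm:subset-approximation} applied to the $(\app{A_1}{S}, Σ)$-automaton $\app{A_1}{ℳ}$ gives $L(\app{A}{\app{A_1}{ℳ}}) \subseteq L(\app{A_1}{ℳ})$, and \cref{obs:composition-of-approximations} gives $\app{A}{\app{A_1}{ℳ}} = \app{(A_1 \comp A)}{ℳ} = \app{A_2}{ℳ}$, which closes the chain above.

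I do not expect a genuine obstacle: the argument is completely dual to \cref{prop:finer-approximation}. The only slightly technical point is the $\app{A_1}{S}$-properness computation, and even there the content is just that composition of binary relations is associative and that relational inverse reverses composition, both standard. Everything else is a direct citation of \cref{thm:subset-approximation} and \cref{obs:composition-of-approximations}.
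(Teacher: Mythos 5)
Your proposal is correct and follows essentially the same route as the paper: obtain the injective $A$ with $A_1 \comp A = A_2$, check that $A$ is $\app{A_1}{S}$-proper, and chain \cref{thm:subset-approximation} with \cref{obs:composition-of-approximations}. You even spell out the properness computation that the paper only gestures at by analogy with \cref{prop:finer-approximation}, so nothing is missing.
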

\begin{proof}
  Since $A_1$ is less partial than $A_2$, we know that there is an injective approximation strategy $A$ such that $A_1 \comp A = A_2$.
  As in the proof of \cref{prop:finer-approximation} we know that $A$ is $\app{A_1}{S}$-proper.
  Hence we obtain
  \begin{align*}
    L(\app{A_1}{ℳ})
    \stackrel{\text{\cref{thm:subset-approximation}}}{⊇}
      L\big(\app{A}{\app{A_1}{ℳ}}\big)
    \stackrel{\text{\cref{obs:composition-of-approximations}}}{=}
      L(\app{(A_1 \comp A)}{ℳ})
    =
      L(\app{A_2}{ℳ})\text{.}
    \tag*\qedhere
  \end{align*}
\end{proof}

The following example approximates a context-free language with a recognisable language (taken from Nederhof~\cite[Sec.~7]{Ned00}).
It is easy to see that the shown approximation strategy is injective and thus leads to subset approximations.

\begin{example}\label{ex:subset-approximation}
  Let $Γ$ be a finite set and $k ∈ ℕ_+$.
  Krauwer and des Tombe~\cite{KraTom81}, Pulman~\cite{Pul86}, and Langendoen and Langsam~\cite{LanLan87} proposed to disallow pushdowns of height greater than $k$.
  This can be achieved by the partial identity
  \( A_{\text{bd},k}: Γ^+ \parto \{w ∈ Γ ∣ \lvert w \rvert ≤ k\} \) where
  \( A_{\text{bd},k}(w) = w \) if $\lvert w \rvert ≤ k$ and
  \( A_{\text{bd},k}(w) = \text{undefined} \) if \(\lvert w \rvert > k\).
\end{example}

\subsection{Potentially incomparable approximations}

The following example shows that our framework is also capable of expressing approximation strategies that lead neither to superset nor to subset approximations.

\begin{example}
  Let $Γ$ be a (not necessarily finite) set, $Δ$ be a finite set, $k ∈ ℕ_+$, and $g: Γ → Δ$ be a total function.
  For pushdown automata with an infinite pushdown alphabet, Johnson~\cite[end of Section~1.4]{Joh98} proposed to first approximate the infinite pushdown alphabet with a finite set and then restrict the pushdown height to $k$.
  This can be easily expressed as the composition of two approximations:
  \begin{align*}
    A_{\text{incomp},k}: Γ^+ &\parto \{w ∣ w ∈ Δ, \lvert w \rvert ≤ k\}
    &A_{\text{incomp},k} &= \hat{g} \comp A_{\text{bound},k}
  \end{align*}
  where $\hat{g}: Γ^+ → Δ^+$ is the point-wise application of $g$.
  Let $\lvert Δ \rvert < \lvert Γ \rvert$.
  Then $\hat{g}$ is total but not injective, $A_{\text{bound},k}$ is injective but not total, and $A_{\text{incomp},k}$ is neither total nor injective.
  Hence \cref{thm:superset-approximation,thm:subset-approximation} provide no further insights about the approximation strategy $A_{\text{incomp},k}$.
  This concurs with the observation of Johnson \cite[end of Section~1.4]{Joh98} that $A_{\text{incomp},k}$ is not guaranteed to induce either subset or superset approximations.
\end{example}

\subsection{Approximation of weighted automata with storage}

\begin{definition}\label{def:weighted-automaton:syntax}
  Let $S$ be a data storage and $K$ be a complete semiring.
  An \emph{$(S, Σ, K)$-automaton} is a tuple \(ℳ = (Q, T, Q_{\text{i}}, Q_{\text{f}}, δ)\) where
    $(Q, T, Q_{\text{i}}, Q_{\text{f}})$ is an $(S, Σ)$-automaton and
    \(δ: T → K\) (\emph{transition weights}).
    We sometimes denote $(Q, T, Q_{\text{i}}, Q_{\text{f}})$ by $ℳ_{\text{uw}}$ (“$\text{uw}$” stands for unweighted).
\end{definition}
Consider the $(S, Σ, K)$-automaton \(ℳ = (Q, T, Q_{\text{i}}, Q_{\text{f}}, δ)\).
The \emph{$ℳ$-configurations},
the \emph{run relation of $ℳ$}, and
the \emph{set of runs of $ℳ$ on $w$} for every $w ∈ Σ^*$ are the same as for $ℳ_{\text{uw}}$.
The \emph{weight of $θ$ in $ℳ$} is the value \(\wt_ℳ(θ) = δ(τ_1) ⋅ … ⋅ δ(τ_k)\) for every $θ = τ_1 ⋯ τ_k$ with $τ_1, …, τ_k ∈ T$.  In particular, we let $\wt_ℳ(ε) = 1$.
The \emph{weighted language induced by $ℳ$} is the function $⟦ℳ⟧: Σ^* → K$ where
\begin{equation}
  ⟦ℳ⟧(w) = ∑\nolimits_{θ ∈ \Runs_ℳ(w)} \wt_ℳ(θ) \label{eq:weighted-automaton:semantics}
\end{equation}
For every $w ∈ Σ^*$.
Let $S$ be a data storage, $K$ be a complete semiring, and $r: Σ^* → K$.
We call $r$ \emph{$(S, Σ, K)$-recognisable} if there is an $(S, Σ, K)$-automaton $ℳ$ with $r = ⟦ℳ⟧$.

We extend \cref{lem:implementing-of-nd-storage} to the weighted case, using the functions $\det$ as defined in \cref{lem:implementing-of-nd-storage}.

\begin{proposition}\label{lem:implementing-of-nd-storage-weighted}
  The classes of $(S, Σ, K)$-recognisable and of $(\det(S), Σ, K)$-recog\-nis\-able languages are the same for every data storage $S$ and semiring $K$.
\end{proposition}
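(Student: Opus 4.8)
The plan is to reuse the powerset storage $\det(S)$ and the map $\det : T \to \det(T)$ from the proof of \cref{lem:implementing-of-nd-storage}, and to carry the transition weights along $\det$. As there, I would first pass to the predicate-free case: the construction behind \cref{lem:predicate-free-normal-form} merely replaces each transition $(q,v,p,r,q')$ by the single transition $(q,v,C,r|_p,q')$ with $r|_p = \{(c,c') ∈ r ∣ c ∈ p\}$, so it preserves runs together with their weights and hence applies verbatim to $(S,Σ,K)$-automata; thus assume $S = (C,\{C\},R,c_{\text{i}})$. For an $(S,Σ,K)$-automaton $ℳ = (Q,T,Q_{\text{i}},Q_{\text{f}},δ)$ I would define $\det(ℳ) = (Q,\det(T),Q_{\text{i}},Q_{\text{f}},δ')$ with $δ'(\det(τ)) = δ(τ)$ for every $τ ∈ T$. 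This is well defined because $\det : T \to \det(T)$ is a bijection: it is surjective by definition of $\det(T)$, and injective because $\det(r) = \det(r')$ implies $r(c) = r'(c)$ for every $c ∈ C$ (read off by evaluating both sides at the singleton $\{c\}$), hence $r = r'$. By construction $\det(ℳ)_{\text{uw}} = \det(ℳ_{\text{uw}})$. Moreover, the ``vice versa'' remark from the proof of \cref{lem:implementing-of-nd-storage} carries over: the only predicate of $\det(S)$ is $\mathcal{P}(C)$ and its only instructions are the $\det(r)$, so every transition of any $(\det(S),Σ,K)$-automaton is $\det(τ)$ for a unique $τ$, and transporting the weights back along $\det$ exhibits that automaton as $\det(ℳ)$ for some $(S,Σ,K)$-automaton $ℳ$.

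The heart of the argument is to turn the language identity $L(ℳ_{\text{uw}}) = L(\det(ℳ)_{\text{uw}})$ of \cref{lem:implementing-of-nd-storage} into a \emph{weight-preserving bijection} $\det : \Runs_ℳ(w) \to \Runs_{\det(ℳ)}(w)$ for every $w ∈ Σ^*$. Extending $\det$ pointwise to $T^* \to (\det(T))^*$ keeps it a bijection. Specialising \eqref{eq:implementing-of-nd-storage:IH} to $c = c_{\text{i}}$, $d = \{c_{\text{i}}\}$, $w' = ε$ gives, from its left-to-right direction, that $θ ∈ \Runs_ℳ(w) \Rightarrow \det(θ) ∈ \Runs_{\det(ℳ)}(w)$. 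For the converse I would use two routine facts about the powerset storage: running $\det(θ)$ from $\{c_{\text{i}}\}$ never reaches $∅$ (as $\det(r)$ is undefined on sets with empty image and sends nonempty sets to nonempty sets), and it is monotone in the starting set (running $\det(θ)$ from any $d ⊇ \{c_{\text{i}}\}$ stays above the run from $\{c_{\text{i}}\}$ and hence does not get stuck). Thus if $(q,\{c_{\text{i}}\},w) ⊢_{\det(θ)} (q',d',ε)$ with $q ∈ Q_{\text{i}}$, $q' ∈ Q_{\text{f}}$, then $d'$ is nonempty; picking any $c' ∈ d'$, the right-to-left direction of \eqref{eq:implementing-of-nd-storage:IH} yields $(q,c_{\text{i}},w) ⊢_θ (q',c',ε)$, i.e.\ $θ ∈ \Runs_ℳ(w)$. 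Injectivity of $\det$ on $T^*$ closes the bijection. Weight preservation is immediate: $\wt_{\det(ℳ)}(\det(θ)) = \wt_ℳ(θ)$ for every $θ ∈ T^*$ since $δ' ∘ \det = δ$ and $\wt$ is the product of the transition weights.

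Putting this together, for each $w ∈ Σ^*$, since $K$ is complete and $\Runs_ℳ(w) ⊆ T^*$ is countable, re-indexing the defining sum \eqref{eq:weighted-automaton:semantics} along this bijection gives $⟦\det(ℳ)⟧(w) = ∑_{θ ∈ \Runs_ℳ(w)} \wt_{\det(ℳ)}(\det(θ)) = ∑_{θ ∈ \Runs_ℳ(w)} \wt_ℳ(θ) = ⟦ℳ⟧(w)$; hence every $(S,Σ,K)$-recognisable series is $(\det(S),Σ,K)$-recognisable. Applying the same computation to the $ℳ$ produced in the ``vice versa'' construction of the first paragraph gives the reverse inclusion, and the two inclusions together with the predicate-free reduction establish the proposition.

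I expect the only genuinely delicate point to be extracting the backward direction of the run bijection from \eqref{eq:implementing-of-nd-storage:IH}: that equivalence is phrased with a universal quantifier over all supersets $d ∋ c$, so one really does need the two observations above (the powerset storage does not collapse to $∅$ from $\{c_{\text{i}}\}$ and is monotone in its starting configuration) to conclude $\det(θ) ∈ \Runs_{\det(ℳ)}(w) \Rightarrow θ ∈ \Runs_ℳ(w)$. Everything else — well-definedness of $δ'$, weight preservation, the countability and re-indexing of the semiring sum, and the converse inclusion — is routine.
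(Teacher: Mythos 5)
Your proposal is correct and follows essentially the same route as the paper: transport the weights along the bijection $\det\colon T → \det(T)$, observe that this induces a weight-preserving bijection between $\Runs_ℳ(w)$ and $\Runs_{\det(ℳ)}(w)$ via \eqref{eq:implementing-of-nd-storage:IH}, and re-index the defining sum. The only difference is that you spell out the backward direction of the run correspondence (non-emptiness and monotonicity of the powerset storage), which the paper leaves implicit in its appeal to \eqref{eq:implementing-of-nd-storage:IH}.
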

\begin{proof}
  Let $ℳ = (Q, T, Q_{\text{i}}, Q_{\text{f}}, δ)$ be an $(S, Σ, K)$-automaton and $ℳ' = (Q', T', Q_{\text{i}}', Q_{\text{f}}', δ')$ a $(\det(S), Σ, K)$-automaton.
  We call $ℳ$ and $ℳ'$ \emph{related} if $ℳ_{\text{uw}}$ and $ℳ_{\text{uw}}'$ are related, and $δ'(\det(τ)) = δ(τ)$ for every $τ ∈ T$.
  Note that $\det\colon T → \det(T)$ is a bijection.
  Clearly, for every $(S, Σ, K)$-automaton $ℳ$ there is an $(\det(S), Σ, K)$-automaton $ℳ'$ such that $ℳ$ and $ℳ'$ are related and vice versa.
  It remains to be shown that $⟦ℳ⟧ = ⟦ℳ'⟧$.
  For every $w ∈ Σ^*$, we derive
  \begin{align*}
    ⟦ℳ⟧(w)
    \stackrel{\text{\eqref{eq:weighted-automaton:semantics}}}{=}
      ∑\nolimits_{θ ∈ R_ℳ} \wt_ℳ(θ)
    = ∑\nolimits_{θ ∈ R_ℳ} \wt_{ℳ'}(\det(θ))
    \stackrel{\text{\eqref{eq:implementing-of-nd-storage:IH}}}{=}
      ∑\nolimits_{θ' ∈ R_{ℳ'}} \wt_{ℳ'}(θ')
    \stackrel{\text{\eqref{eq:weighted-automaton:semantics}}}{=}
      ⟦ℳ'⟧(w)\text{.}
      \tag*{\qedhere}
  \end{align*}
\end{proof}

\begin{definition}\label{def:weighted-approx}
  Let \(ℳ = (Q, T, Q_{\text{i}}, Q_{\text{f}}, δ)\) be an $(S, Σ, K)$-automaton and $A$ be an $S$-proper approximation strategy.
  The \emph{approximation of $ℳ$ with respect to $A$} is the $(\app{A}{S}, Σ, K)$-automaton
  \(
    \app{A}{ℳ} = (Q, \app{A}{T}, Q_{\text{i}}, Q_{\text{f}}, \app{A}{δ})
  \)
  where $\app{A}{S}$ and $\app{A}{T}$ are defined as in \cref{def:automaton-approximation}, and 
  \( \app{A}{δ}(τ') = ∑_{τ ∈ T: \app{A}{τ} = τ'} δ(τ) \)
  for every $τ' ∈ \app{A}{T}$.
\end{definition}

\begin{lemma}\label{lem:wt-approx}
  Let \(ℳ\) be an $(S, Σ, K)$-automaton, $A$ be an $S$-proper approximation strategy, ${≤}$ be a partial order on $K$, and $K$ be positively ${≤}$-ordered.
  \begin{enumerate}
  \item\label{item:wt-approx-types:over}
    \(\wt_{\app{A}{ℳ}}(θ') ≥ ∑_{θ ∈ \Runs_{ℳ} : \app{A}{θ} = θ'} \wt_ℳ(θ)\) for every \(θ' ∈ \Runs_{\app{A}{ℳ}}\).
  \item\label{item:wt-approx-types:under}
    If $A$ is injective, then \(\wt_{\app{A}{ℳ}}(θ') = ∑_{θ ∈ \Runs_{ℳ} : \app{A}{θ} = θ'} \wt_ℳ(θ)\) for every \(θ' ∈ \Runs_{\app{A}{ℳ}}\).
  \end{enumerate}
\end{lemma}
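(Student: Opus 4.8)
The plan is to reduce both parts to a single exact weight identity and then compare index sets. Fix $θ' = τ_1' ⋯ τ_k' ∈ \Runs_{\app{A}{ℳ}}$ with $τ_1', …, τ_k' ∈ \app{A}{T}$. Since $T$ is finite, every sum $∑_{τ ∈ T : \app{A}{τ} = τ_i'} δ(τ)$ is finite, so by \cref{def:weighted-approx} we have $\wt_{\app{A}{ℳ}}(θ') = ∏_{i=1}^{k}\big(∑_{τ ∈ T : \app{A}{τ} = τ_i'} δ(τ)\big)$. Expanding this product of finite sums by distributivity of $⋅$ over $+$, and using that $\app{A}{}$ is applied to sequences of transitions point-wise (so a sequence $θ$ satisfies $\app{A}{θ} = θ'$ exactly when it has length $k$ and its $i$-th transition lies in $\{τ ∈ T : \app{A}{τ} = τ_i'\}$), yields
\[
  \wt_{\app{A}{ℳ}}(θ') = ∑\nolimits_{θ ∈ T^* : \app{A}{θ} = θ'} \wt_ℳ(θ)\text{,}
\]
a finite sum whose index set is a subset of $T^{k}$.

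For \ref{item:wt-approx-types:over}, note that $\{θ ∈ \Runs_ℳ : \app{A}{θ} = θ'\} ⊆ \{θ ∈ T^* : \app{A}{θ} = θ'\}$ since $\Runs_ℳ ⊆ T^*$, so the identity above splits into a part summed over $\Runs_ℳ$ and a part summed over $T^* ∖ \Runs_ℳ$. Because $K$ is positively ${≤}$-ordered we have $0 ≤ a$ for all $a ∈ K$, hence every summand $\wt_ℳ(θ)$ is ${≥}\,0$ and (by induction, using that $+$ preserves ${≤}$) so is the finite sub-sum over $T^* ∖ \Runs_ℳ$; adding it to the sub-sum over $\Runs_ℳ$ cannot decrease the value, again because $+$ preserves ${≤}$. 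This gives $\wt_{\app{A}{ℳ}}(θ') ≥ ∑_{θ ∈ \Runs_ℳ : \app{A}{θ} = θ'} \wt_ℳ(θ)$.

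For \ref{item:wt-approx-types:under}, suppose in addition that $A$ is injective. It is enough to prove $\{θ ∈ T^* : \app{A}{θ} = θ'\} ⊆ \Runs_ℳ$, for then the two index sets from the previous paragraph coincide and the identity becomes an equality. Let $θ ∈ T^*$ with $\app{A}{θ} = θ'$. If $θ' = ε$, then $θ = ε ∈ \Runs_ℳ$ since $\app{A}{}$ preserves length. Otherwise, because $θ' ∈ \Runs_{\app{A}{ℳ}}$ there are $\bar q, \bar q' ∈ Q$, $\bar c, \bar c' ∈ C'$ and $\bar w, \bar w' ∈ Σ^*$ with $(\bar q, \bar c, \bar w) ⊢_{θ'} (\bar q', \bar c', \bar w')$. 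Every instruction of $\app{A}{S}$ has the form $\app{A}{r} = A^{-1} \comp r \comp A$, so $\dom(\app{A}{r}) ⊆ \img(A)$ and $\img(\app{A}{r}) ⊆ \img(A)$; as $θ'$ is non-empty, its first transition requires $\bar c ∈ \dom(\app{A}{r})$ for the relevant $r ∈ R$, hence $\bar c ∈ \img(A)$, and symmetrically its last transition gives $\bar c' ∈ \img(A)$. Applying \cref{lem:subset-approximation} to this $θ$ (note $\app{A}{θ} = θ'$) then yields $(\bar q, A^{-1}(\bar c), \bar w) ⊢_θ (\bar q', A^{-1}(\bar c'), \bar w')$, so $θ ∈ \Runs_ℳ$ as required.

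I expect the main obstacle to be the argument for part \ref{item:wt-approx-types:under}: before invoking \cref{lem:subset-approximation} one has to verify that the storage configurations witnessing $θ' ∈ \Runs_{\app{A}{ℳ}}$ genuinely lie in $\img(A)$, which is exactly where the explicit form $\app{A}{r} = A^{-1} \comp r \comp A$ and the separate treatment of the empty run come in. The weight identity and the inequality in \ref{item:wt-approx-types:over} are routine, the only point to watch being that each sum involved is finite since $T$ is.
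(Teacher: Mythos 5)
Your proof is correct, but it is organised differently from the paper's. The paper proves part~\ref{item:wt-approx-types:over} by induction on the length of $θ'$, carrying the sum over $\{θ ∈ \Runs_ℳ : \app{A}{θ} = θ'\}$ through the induction and incurring an inequality at each step (justified, as in your argument, by positivity of $K$); part~\ref{item:wt-approx-types:under} then reruns the same induction, upgrading each ``$≥$'' to ``$=$'' by observing that injectivity together with \cref{lem:subset-approximation} makes the two index sets coincide. You instead establish in one pass, by distributing the product of the finitely many sums $\app{A}{δ}(τ_i')$, the exact identity \(\wt_{\app{A}{ℳ}}(θ') = ∑_{θ ∈ T^* : \app{A}{θ} = θ'} \wt_ℳ(θ)\) over \emph{all} transition sequences, and reduce both parts to comparisons of index sets: part~\ref{item:wt-approx-types:over} drops the nonnegative terms indexed by $T^* ∖ \Runs_ℳ$, and part~\ref{item:wt-approx-types:under} shows no terms are dropped because every preimage sequence of a run of $\app{A}{ℳ}$ is itself a run of $ℳ$. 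The underlying ingredients (distributivity, positivity, \cref{lem:subset-approximation}) are the same, but your decomposition cleanly separates the algebraic identity from the automata-theoretic content, and it makes explicit a detail the paper's proof leaves implicit: before \cref{lem:subset-approximation} can be applied one must check that the storage configurations witnessing $θ' ∈ \Runs_{\app{A}{ℳ}}$ lie in $\img(A)$, which you derive from $\app{A}{r} = A^{-1} \comp r \comp A$ (treating $θ' = ε$ separately). The paper's inductive formulation, in exchange, never needs the auxiliary sum over $T^*$.
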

\begin{proof}
  \textbf{ad~\ref{item:wt-approx-types:over}:} We proof the claim by induction on the length of $θ'$.
    For $θ' = ε$, we derive
    \begin{align*}
      \wt_{\app{A}{ℳ}}(ε) = 1 ≥ 1 = \wt_{ℳ}(ε) = ∑\nolimits_{θ ∈ \Runs_ℳ: \app{A}{θ} = ε} \wt_ℳ(θ)\text{.}
    \end{align*}
    For $θ'τ' ∈ \Runs_{\app{A}{ℳ}}$ with $τ' ∈ \app{A}{T}$, we derive
    \begin{align*}
      wt_{\app{A}{ℳ}}(θ'τ')
      &= \wt_{\app{A}{ℳ}}(θ') ⋅ \app{A}{δ}(τ') \\*
      &≥ \big( ∑\nolimits_{θ ∈ \Runs_ℳ, \app{A}{θ} = θ'} \wt_ℳ(θ) \big) ⋅ \app{A}{δ}(τ')
        \tag{by IH and since $⋅$ preserves $≤$} \\
      &= \big( ∑\nolimits_{θ ∈ \Runs_ℳ, \app{A}{θ} = θ'} \wt_ℳ(θ) \big) ⋅ \big( ∑\nolimits_{τ ∈ T: \app{A}{τ} = τ'} δ(τ) \big)
        \tag{by \cref{def:weighted-approx}} \\
      &= ∑\nolimits_{θ ∈ \Runs_ℳ, τ ∈ T: (\app{A}{θ} = θ') ∧ (\app{A}{τ} = τ')} \wt_ℳ(θ) ⋅ δ(τ)
        \tag{by distributivity of $K$} \\
      &≥ ∑\nolimits_{θ ∈ \Runs_ℳ, τ ∈ T: θτ ∈ \Runs_ℳ ∧ (\app{A}{θτ} = θ'τ')} \wt_ℳ(θ) ⋅ δ(τ)
        \tag{by $(*)$ and since $+$ preserves $≤$} \\*
      &= ∑\nolimits_{\bar{θ} ∈ \Runs_ℳ: (\app{A}{\bar{θ}} = θ'τ')} \wt_ℳ(\bar{θ})
        \tag{by \cref{def:weighted-approx}}
    \end{align*}
    For $(*)$, we note that the index set of the left sum subsumes that of the right sum and hence $≥$ is justified.

    \textbf{ad~\ref{item:wt-approx-types:under}:} The proof follows the same structure as the proof of \ref{item:wt-approx-types:over}.
    But we make the following modifications:
    In the induction base, we can write “$=$” instead of “$≥$” since $1 = 1$.
    For the induction step, we assume that \ref{item:wt-approx-types:under} holds for every $θ'$ of length $n$.
    Then the “$≥$” in the second line of the induction step can be replaced by “$=$”.
    In order to turn the “$≥$” in the fifth line of the induction step into “$=$”, we propose that the index sets of the left and the right sum are the same.
    This holds since $A$ is injective, $θ'τ'$ is in $R_{\app{A}{ℳ}}$, and hence (by \cref{lem:subset-approximation}) each $θτ$ with $\app{A}{θτ} = θ'τ'$ is in $\Runs_ℳ$.
\end{proof}

\begin{theorem}\label{thm:weighted-approx-types}
  Let \(ℳ\) be an $(S, Σ, K)$-automaton, $A$ be an $S$-proper approximation strategy, and ${≤}$ be a partial order on $K$, and $K$ be positively ${≤}$-ordered.
  \begin{enumerate}
  \item\label{item:weighted-approx-types:over}
    If $A$ is total, then \(⟦\app{A}{ℳ}⟧(w) ≥ ⟦ℳ⟧(w)\) for every $w ∈ Σ^*$.
  \item\label{item:weighted-approx-types:under}
    If $A$ is injective, then \(⟦\app{A}{ℳ}⟧(w) ≤ ⟦ℳ⟧(w)\) for every $w ∈ Σ^*$.
  \end{enumerate}
\end{theorem}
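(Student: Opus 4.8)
The plan is to combine the unweighted run correspondences of \cref{lem:superset-approximation,lem:subset-approximation} with the per-run weight estimates of \cref{lem:wt-approx}, while keeping careful track of how accepting runs of $\mathcal{M}$ and of $\app{A}{\mathcal{M}}$ match up under the point-wise extension $\app{A}{}$ of \cref{lem:superset-approximation}.  Note that the underlying unweighted automaton $(\app{A}{\mathcal{M}})_{\text{uw}}$ is the approximation of $\mathcal{M}_{\text{uw}}$ in the sense of \cref{def:automaton-approximation}, so those two lemmas apply to it.  Throughout I would use that $K$ is positively ${\le}$-ordered, so that every weight is ${\ge}\,0$ and enlarging the index set of a sum can only increase its value with respect to ${\le}$.

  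\emph{Part~\ref{item:weighted-approx-types:over}.}
  First I would check that $\app{A}{}$ maps $\Runs_\mathcal{M}(w)$ into $\Runs_{\app{A}{\mathcal{M}}}(w)$: if $(q, c_{\text{i}}, w) \vdash_\theta (q', c', \varepsilon)$ with $q \in Q_{\text{i}}$ and $q' \in Q_{\text{f}}$, then \cref{lem:superset-approximation} gives $(q, A(c_{\text{i}}), w) \vdash_{\app{A}{\theta}} (q', A(c'), \varepsilon)$, and since $A(c_{\text{i}})$ is the initial configuration of $\app{A}{S}$ by \cref{def:storage-approximation}, this witnesses $\app{A}{\theta} \in \Runs_{\app{A}{\mathcal{M}}}(w)$.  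Grouping $\Runs_\mathcal{M}(w)$ by the fibres of this map I would then estimate
  \begin{align*}
    \llbracket\mathcal{M}\rrbracket(w)
    &= \sum\nolimits_{\theta' \in \Runs_{\app{A}{\mathcal{M}}}(w)}\ \sum\nolimits_{\theta \in \Runs_\mathcal{M}(w) :\, \app{A}{\theta} = \theta'} \wt_\mathcal{M}(\theta) \\
    &\le \sum\nolimits_{\theta' \in \Runs_{\app{A}{\mathcal{M}}}(w)}\ \sum\nolimits_{\theta \in \Runs_\mathcal{M} :\, \app{A}{\theta} = \theta'} \wt_\mathcal{M}(\theta) \\
    &\le \sum\nolimits_{\theta' \in \Runs_{\app{A}{\mathcal{M}}}(w)} \wt_{\app{A}{\mathcal{M}}}(\theta')
      = \llbracket\app{A}{\mathcal{M}}\rrbracket(w)\text{,}
  \end{align*}
  where the first equality holds because a fibre over a $\theta'$ outside $\app{A}{}(\Runs_\mathcal{M}(w))$ contributes the empty sum, the next inequality enlarges each inner index set, and the last inequality is \cref{lem:wt-approx}\ref{item:wt-approx-types:over} applied term by term.

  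\emph{Part~\ref{item:weighted-approx-types:under}.}
  The key sub-step is: for every $\theta' \in \Runs_{\app{A}{\mathcal{M}}}(w)$ and every $\theta \in T^*$ with $\app{A}{\theta} = \theta'$ we have $\theta \in \Runs_\mathcal{M}(w)$.  To see this, fix a witnessing run $(q, A(c_{\text{i}}), w) \vdash_{\theta'} (q', c'', \varepsilon)$ with $q \in Q_{\text{i}}$ and $q' \in Q_{\text{f}}$.  Then $A(c_{\text{i}}) \in \img(A)$, and also $c'' \in \img(A)$, because every instruction $\app{A}{r} = A^{-1} \comp r \comp A$ of $\app{A}{S}$ has image contained in $\img(A)$ (and $c'' = A(c_{\text{i}})$ if $\theta' = \varepsilon$).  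Hence \cref{lem:subset-approximation} applies and yields $(q, A^{-1}(A(c_{\text{i}})), w) \vdash_\theta (q', A^{-1}(c''), \varepsilon)$; since $A$ is injective, $A^{-1}(A(c_{\text{i}})) = c_{\text{i}}$, so indeed $\theta \in \Runs_\mathcal{M}(w)$.  Consequently the sets $\{\theta \in \Runs_\mathcal{M} :\, \app{A}{\theta} = \theta'\}$ with $\theta' \in \Runs_{\app{A}{\mathcal{M}}}(w)$ are pairwise disjoint and all contained in $\Runs_\mathcal{M}(w)$, so using \cref{lem:wt-approx}\ref{item:wt-approx-types:under} I would conclude
  \begin{align*}
    \llbracket\app{A}{\mathcal{M}}\rrbracket(w)
    &= \sum\nolimits_{\theta' \in \Runs_{\app{A}{\mathcal{M}}}(w)} \wt_{\app{A}{\mathcal{M}}}(\theta')
      = \sum\nolimits_{\theta' \in \Runs_{\app{A}{\mathcal{M}}}(w)}\ \sum\nolimits_{\theta \in \Runs_\mathcal{M} :\, \app{A}{\theta} = \theta'} \wt_\mathcal{M}(\theta) \\
    &\le \sum\nolimits_{\theta \in \Runs_\mathcal{M}(w)} \wt_\mathcal{M}(\theta)
      = \llbracket\mathcal{M}\rrbracket(w)\text{.}
  \end{align*}

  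\emph{Main obstacle.}
  The difficulty is not computational but lies in the bookkeeping of accepting runs.  In particular, part~\ref{item:weighted-approx-types:under} hinges on the observation that the final storage configuration of \emph{any} accepting run of $\app{A}{\mathcal{M}}$ automatically lies in $\img(A)$ — which is exactly what makes \cref{lem:subset-approximation} applicable — and this in turn relies on the special shape $A^{-1} \comp r \comp A$ of the approximated instructions; one also has to be sure that the fibre over $\theta'$ occurring in \cref{lem:wt-approx} may be taken verbatim from $\Runs_\mathcal{M}$ without losing any run.  Apart from that, one should double-check that every index-set manipulation above only adds non-negative terms, which is precisely where the positivity hypothesis on $K$ is used.
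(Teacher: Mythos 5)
Your proof is correct and follows essentially the same route as the paper: apply \cref{lem:wt-approx} to each accepting run of $\app{A}{\mathcal{M}}$ on $w$ and then regroup the resulting double sum over the fibres of $\app{A}{}$, using \cref{lem:superset-approximation} (resp.\ \cref{lem:subset-approximation}) together with positivity of $K$ to compare the index sets. Your explicit check in part~(ii) that every preimage of an accepting run of $\app{A}{\mathcal{M}}$ on $w$ is itself an accepting run of $\mathcal{M}$ on $w$ (via $\img(A)$ and injectivity) just spells out a step the paper passes over implicitly; otherwise the two arguments coincide.
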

\begin{proof}
  \textbf{ad~\ref{item:weighted-approx-types:over}:} For every $w ∈ Σ^*$, we derive
    \begin{align*}
      &⟦\app{A}{ℳ}⟧(w)
      \stackrel{\text{\eqref{eq:weighted-automaton:semantics}}}{=}
        ∑\nolimits_{θ' ∈ \Runs_{\app{A}{ℳ}}(w)} \wt_{\app{A}{ℳ}}(θ')
      \stackrel{(*)}{≥}
        ∑\nolimits_{θ' ∈ \Runs_{\app{A}{ℳ}}(w)} ∑\nolimits_{θ ∈ \Runs_{ℳ}\colon \app{A}{θ} = θ'} \wt_{ℳ}(θ) \\*
      &\qquad\stackrel{\text{\cref{def:automaton-approximation}}}{=}
        ∑\nolimits_{θ' ∈ \Runs_{\app{A}{ℳ}}(w)} ∑\nolimits_{θ ∈ \Runs_{ℳ}(w)\colon \app{A}{θ} = θ'} \wt_ℳ(θ)
      \stackrel{(\dagger)}{=}
        ∑\nolimits_{θ ∈ \Runs_{ℳ}(w)} \wt_ℳ(θ)
      \stackrel{\text{\eqref{eq:weighted-automaton:semantics}}}{=}
        ⟦ℳ⟧(w)
    \end{align*}
    where $(*)$ follows from \cref{lem:wt-approx}\,\ref{item:wt-approx-types:over} and the fact that $+$ preserves $≤$.
    For $(\dagger)$, we argue that for each $θ ∈ \Runs_ℳ(w)$ there is exactly one $θ' ∈ R_{\app{A}{ℳ}}(w)$ with $\app{A}{θ} = θ'$ since $A$ is total.
    Hence the left side and the right side of the equation have exactly the same addends.
    Then, since $+$ is commutative, the “$=$” is justified.

  \textbf{ad~\ref{item:weighted-approx-types:under}:} For every $w ∈ Σ^*$, we derive
    \begin{align*}
      &⟦\app{A}{ℳ}⟧(w)
      \stackrel{\text{\eqref{eq:weighted-automaton:semantics}}}{=}
        ∑\nolimits_{θ' ∈ \Runs_{\app{A}{ℳ}}(w)} \wt_{\app{A}{ℳ}}(θ')
      \stackrel{\text{\cref{lem:wt-approx}\,\ref{item:wt-approx-types:under}}}{=}
        ∑\nolimits_{θ' ∈ \Runs_{\app{A}{ℳ}}(w)} ∑\nolimits_{θ ∈ \Runs_{ℳ}\colon \app{A}{θ} = θ'} \wt_{ℳ}(θ) \\*
      &\qquad\stackrel{\text{\cref{def:automaton-approximation}}}{=}
        ∑\nolimits_{θ' ∈ \Runs_{\app{A}{ℳ}}(w)} ∑\nolimits_{θ ∈ \Runs_{ℳ}(w)\colon \app{A}{θ} = θ'} \wt_ℳ(θ)
      \stackrel{(\ddagger)}{≤}
        ∑\nolimits_{θ ∈ \Runs_{ℳ}(w)} \wt_ℳ(θ)
      \stackrel{\text{\eqref{eq:weighted-automaton:semantics}}}{=}
        ⟦ℳ⟧(w)\text{.}
    \end{align*}
    For $(\ddagger)$, we argue that for each $θ ∈ \Runs_ℳ(w)$ there is at most one $θ' ∈ R_{\app{A}{ℳ}}(w)$ with $\app{A}{θ} = θ'$ since $A$ is a partial function.
    Hence all the addends on the left side of the inequality also occur on the right side.
    But there may be an addend $\wt_ℳ(θ)$ on the right side which does not occur on the left side because $\app{A}{θ} = \mathrm{undefined}$.
    Since $+$ preserves $≤$, the “$≤$” is justified. \qedhere
\end{proof}

\section{Approximation of multiple context-free languages}
\label{sec:approximation-mcfl}

Due to the equivalence of pushdown automata and context-free grammars \cite[Thms.~5.3 and~5.4]{HopUll79}, the approximation strategies in \cref{ex:superset-approximation,ex:subset-approximation} can be used for the approximation of context-free languages.
The framework presented in this paper together with the automata characterisation of multiple context-free languages \cite[Thm.~18]{Den16} allows an automata-theoretic view on the approximation of multiple context-free languages.
The automata characterisation uses an excursion-restricted form of automata with tree-stack storage \cite{Den16}.
A tree-stack is a tree with a designated position inside of it (the \emph{stack pointer}).
The automaton can read the label under the stack pointer, can determine whether the stack pointer is at the bottom (i.e. the root), and can modify the tree stack by moving the stack pointer or by adding a node.
The excursion-restriction bounds how often the stack pointer may enter a position from its parent node.

\begin{definition}
  Let $Γ$ be a finite set.
  The \emph{tree-stack storage over $Γ$} is the deterministic data storage \( \mathrm{TSS}_Γ = (\mathrm{TS}_Γ, P_{\text{ts}}, R_{\text{ts}}, c_{\text{i}, \text{ts}}) \) where
  \begin{itemize}
  \item \(\mathrm{TS}_Γ\) is the set of tuples $⟨ξ, ρ⟩$ where $ξ: ℕ_+^* \parto Γ ∪ \{@\}$, $\dom(ξ)$ is finite and prefix-closed,\footnote{A set $D ⊆ ℕ_+^*$ is \emph{prefix closed} if for each $w ∈ D$, every prefix of $w$ is also in $D$.} $ρ ∈ \dom(ξ)$, and $ξ(ρ') = @$ iff $ρ' = ε$ (We call $ξ$ the \emph{stack} and $ρ$ the \emph{stack pointer} of $⟨ξ, ρ⟩$.);
  \item \(c_{\text{i}, \text{ts}} = ⟨\{(ε, @)\}, ε⟩\);
  \item \(P_{\text{ts}} = \{\mathrm{TS}_Γ, \tsbottom\} ∪ \{\equals_γ ∣ γ ∈ Γ\}\) with 
    \(\tsbottom = \{⟨ξ, ρ⟩ ∈ \mathrm{TS}_Γ ∣ ρ = ε\}\) and
    \(\equals_γ = \{⟨ξ, ρ⟩ ∈ \mathrm{TS}_Γ ∣ ξ(ρ) = γ\}\) for every $γ ∈ Γ$; and
  \item \(R_{\text{ts}} = \{\down\} ∪ \{\up_n, \push_{n, γ} ∣ n ∈ ℕ, γ ∈ Γ\}\) where for each $n ∈ ℕ_+$ and $γ ∈ Γ$:
    \begin{itemize}
    \item \(\up_n = \{(⟨ξ, ρ⟩, ⟨ξ, ρn⟩) ∣ ⟨ξ, ρ⟩ ∈ \mathrm{TS}_Γ, ρn ∈ \dom(ξ)\}\),
    \item \(\down = ⋃_{n ∈ ℕ₊} \up_n^{-1}\), and
    \item \(\push_{n, γ} = \{(⟨ξ, ρ⟩, ⟨ξ ∪ \{(ρn, γ)\}, ρn⟩) ∣ ⟨ξ, ρ⟩ ∈ \mathrm{TS}_Γ, ρn ∉ \dom(ξ)\}\).\qedhere
    \end{itemize}
  \end{itemize}
\end{definition}

\begin{example}\label{ex:mcfl}
  Consider $Σ = \{\text{a}, \text{b}, \text{c}\}$, $Γ = \{*, \# \}$, the $(\mathrm{TSS}_Γ, Σ)$-automaton $ℳ = ([4], T, \{1\}, \{4\})$, and
  \begin{align*}
    T:
    \begin{array}[t]{r@{{}=(}l@{,\,}l@{,\,}l@{,\,}l@{,\,}l@{)}}
      τ₁ & 1 & \text{a} & \mathrm{TS}_Γ & \push_{1,*}  & 1 \\ 
      τ₂ & 1 & ε        & \mathrm{TS}_Γ & \push_{1,\#} & 2 \\
      τ₃ & 2 & ε        & \equals_\#    & \down       & 2
    \end{array}
    \quad
    \begin{array}[t]{r@{{}=(}l@{,\,}l@{,\,}l@{,\,}l@{,\,}l@{)}}
      τ₄ & 2 & \text{b} & \equals_*    & \down       & 2 \\
      τ₅ & 2 & ε        & \tsbottom     & \up_1      & 3 \\
      τ₆ & 3 & \text{c} & \equals_*    & \up_1       & 3 
    \end{array}
    \quad
    \begin{array}[t]{r@{{}=(}l@{,\,}l@{,\,}l@{,\,}l@{,\,}l@{)}}
      τ₇ & 3 & ε        & \equals_\#   & \down       & 4\text{.}
    \end{array}
  \end{align*}
  The runs of $ℳ$ all have a specific form:
  $ℳ$ executes $τ₁$ arbitrarily often (say $n$ times) until it executes $τ₂$, leading to the storage configuration
  \( ζ = ⟨\{(ε, @), (1, *), …, (1^n, *), (1^{n+1}, \#)\}, 1^{n+1}⟩ \)
  where $1^k$ means that 1 is repeated $k$ times.
  The stack of $ζ$ is a monadic tree where the leave is labelled with $\#$, the root is labelled with $@$, and the remaining $n$ nodes are labelled with $*$.
  The stack pointer of $ζ$ points to the leave.
  From this configuration $ℳ$ executes $τ₃$ once and $τ₄$ $n$ times (i.e. for each $*$ on the stack), moving the stack pointer to the root.
  Then $ℳ$ executes $τ₅$ once and $τ₆$ $n$ times, leading to the final state.
  Hence the language of $ℳ$ is $L(ℳ) = \{ \text{a}^n \text{b}^n \text{c}^n ∣ n ∈ ℕ \}$, which is not context-free.
\end{example}

\begin{example}\label{ex:approximation-mcfl}
  The following two approximation strategies for multiple context-free languages are taken from the literature.
  Let $Γ$ be a finite set.
  \begin{enumerate}
  \item
    Van Cranenburgh~\cite[Sec.~4]{Cra12} observed that the idea of \cref{ex:superset-approximation}\,\ref{ex:superset-approximation:equivalent-nts} also applies to multiple context-free grammars (short: MCFG).
    The idea can be applied to tree-stack automata similarly to the way it was applied to pushdown automata in \cref{ex:superset-approximation}\,\ref{ex:superset-approximation:equivalent-nts}.
    The resulting data storage is still a tree-stack storage.
    This approximation strategy is total and thus leads to a superset approximation.
  \item
    Burden and Ljunglöf~\cite[Sec.~4]{BurLju05} and van Cranenburgh~\cite[Sec.~4]{Cra12} proposed to split each production of a given MCFG into multiple productions, each of fan-out 1.
    Since the resulting grammar is of fan-out 1, it produces a context-free language and can be recognised by a pushdown automaton.
    The corresponding approximation strategy in our framework is
    \( A_{\text{cf}, Γ}: \mathrm{TS}_Γ → Γ^* \) with  \( A_{\text{cf}, Γ}((ξ, n_1⋯n_k)) = ξ(n_1⋯n_k) ⋯ ξ(n_1n_2) ξ(n_1) \)
    for every $(ξ, n_1⋯n_k) ∈ \mathrm{TS}_Γ$ with $n_1, …, n_k ∈ ℕ_+$.
    The resulting data storage is a pushdown storage.
    $A_{\text{cf}, Γ}$ is total and thus leads to a superset approximation.\qedhere
  \end{enumerate}
\end{example}

\begin{figure}
  \centering
\(  \begin{array}{@{(}l@{,\,}l@{,\,}l@{,\,}l@{,\,}l@{)}}
      1 & \text{a} & Γ^*        & \push_*            & 1 \\
      1 & ε        & Γ^*        & \push_\#           & 2 \\
      2 & ε        & \equals_\# & \pop               & 2 \\
      2 & \text{b} & \equals_*  & \pop               & 2 \\
      2 & ε        & \tsbottom  & \push_* ∪ \push_\# & 3 \\
      3 & \text{c} & \equals_*  & \push_* ∪ \push_\# & 3 \\
      3 & ε        & \equals_\# & \pop               & 4
  \end{array}
\) \qquad\qquad                                       
\(  \begin{array}{@{(}l@{,\,}l@{,\,}l@{,\,}l@{,\,}l@{)}}
      1 & \text{a} & Γ_@    & \{(γ,*) ∣ γ ∈ Γ_@\}           & 1 \\
      1 & ε        & Γ_@    & \{(γ,\#) ∣ γ ∈ Γ_@\}          & 2 \\
      2 & ε        & \{\#\} & \{(γ, γ') ∣ γ, γ' ∈ Γ_@\}     & 2 \\
      2 & \text{b} & \{*\}  & \{(γ, γ') ∣ γ, γ' ∈ Γ_@\}     & 2 \\
      2 & ε        & \{@\}  & \{(γ, γ') ∣ γ ∈ Γ_@, γ' ∈ Γ\} & 3 \\
      3 & \text{c} & \{*\}  & \{(γ, γ') ∣ γ ∈ Γ_@, γ' ∈ Γ\} & 3 \\
      3 & ε        & \{\#\} & \{(γ, γ') ∣ γ, γ' ∈ Γ_@\}     & 4                  
    \end{array}
\)
  \caption{Transitions of \(A_{\text{cf},Γ}(ℳ)\) (left) and \((A_{\text{cf},Γ} \comp A_{\text{top}})(ℳ)\) (right)}
  \label{fig:tss-approximation-transitions}
\end{figure}

\begin{example}\label{ex:approximations-of-mcfl}
  Let us consider the $(\mathrm{TSS}_Γ, Σ)$-automaton $ℳ$ from \cref{ex:mcfl}.
  \Cref{fig:tss-approximation-transitions} shows the transitions of the $(\app{A_{\text{cf},Γ}}{\mathrm{TSS}_Γ}, Σ)$-automaton $\app{A_{\text{cf},Γ}}{ℳ}$ (cf. \cref{ex:approximation-mcfl}) and
  the $(\app{(A_{\text{cf},Γ} \comp A_{\text{top}})}{\mathrm{TSS}_Γ}, Σ)$-automaton $\app{(A_{\text{cf},Γ} \comp A_{\text{top}})}{ℳ}$ (cf. also \cref{ex:superset-approximation}).
  The languages recognised by the two automata are 
    \(L(\app{A_{\text{cf},Γ}}{ℳ}) = \{ a^n b^n c^m ∣ n, m ∈ ℕ \}\) and
    \(L(\app{(A_{\text{cf},Γ} \comp A_{\text{top}})}{ℳ}) = \{a^n b^m c^k ∣ n, m, k ∈ ℕ \}\).
  Clearly, \(L(\app{A_{\text{cf},Γ}}{ℳ})\) is a context-free language.
  Since \(\app{(A_{\text{cf}} \comp A_{\text{top}})}{ℳ}\) has finitely many storage configurations, its language is recognisable by a finite state automaton (\cref{obs:equivalent-fsa}).
\end{example}

\section{Coarse-to-fine $n$-best parsing for weighted automata with storage}
\label{sec:coarse-to-fine-parsing}

Parsing is a process that takes a finite representation $ℛ$ of a language \(L(ℛ) ⊆ Σ^*\) and a word $w ∈ Σ^*$, and outputs analyses of $w$ in $ℛ$.
If $ℛ$ is a grammar, then the analyses of $w$ are the \emph{parse trees} in~$ℛ$ for~$w$.
If $ℛ$ is an automaton (with storage), then the analyses of $w$ are the runs of~$ℛ$ on~$w$.
Since this paper is concerned with weighted automata with storage, let $ℛ$ be an $(S, Σ, K)$-automaton.
Also, let $K$ be partially ordered by a relation $≤$.
We will call a run $θ$ „better than“ a run $θ'$ if $\wt_ℛ(θ) ≥ \wt_ℛ(θ')$.
Using $\wt_ℛ$, we can assign weights to the runs of $ℛ$ on $w$ and enumerate those runs in descending order (with respect to $≤$) of their weights.\footnote{The resulting list of runs is not unique since different runs may get the same weight and since we only have a partial order.}
If we output the first $n$ from the descending list of runs, we call the parsing \emph{$n$-best parsing} \cite{HuaChi05}.

\emph{Coarse-to-fine parsing} \cite{Cha+06} employs a simpler (i.e. easier to parse) automaton $ℛ'$ to parse $w$ and uses the runs of $ℛ'$ on $w$ to narrow the search space for the runs of $ℛ$ on $w$.
To ensure that there are runs of $ℛ'$ on $w$ whenever there are runs of $ℛ$ on $w$, we require that $L(ℛ') ⊇ L(ℛ)$.
The automaton $ℛ'$ is obtained by superset approximation.
In particular, we require $ℛ' = \app{A}{ℛ}$ for some total approximation strategy $A$.

\begin{algorithm}
  \caption{Coarse-to-fine $n$-best parsing for weighted automata with storage}\label{alg:coarse-to-fine}
  \begin{algorithmic}[1]
    \Require
    $(S, Σ, K)$-automaton $ℳ$, \enspace
    $S$-proper total approximation strategy $A$, \enspace
    $n ∈ ℕ$, \enspace
    word $w ∈ Σ^*$
    \Ensure
    some set of $n$ greatest (with respect to the image under $\wt_ℳ$ and $≤$) runs of $ℳ$ on $w$
    \vspace{.5\baselineskip}
    \State \(X ← ∅\)\Comment{$X$ is the set of runs of $ℳ$ on $w$ that were already found}
    \State \(Y ← \Runs_{\app{A}{ℳ}}(w)\)\Comment{$Y$ is the set of runs of $\app{A}{ℳ}$ on $w$ that were not yet considered}
    \While{\(\lvert X \rvert < n\) \textbf{or} \(\min_{θ ∈ X} \wt_{ℳ}(θ) < \max_{θ' ∈ Y} \wt_{\app{A}{ℳ}}(θ')\)}
      \State \(θ' ← \text{smallest element of $Y$ with respect to the image under $\wt_{\app{A}{ℳ}}$}\)
      \State \(Y ← Y ∖ \{θ'\}\)
      \For{\textbf{each} \(θ ∈ \app{A}{}^{-1}(θ')\) that is a sequence of transitions in $ℳ$}
        \If{\(θ ∈ \Runs_ℳ\)}
          \(X ← X ∪ \{θ\}\)\Comment{it is sufficient to only check the storage behaviour for $θ$}
        \EndIf
      \EndFor
    \EndWhile
    \State\Return a set of $n$ greatest elements of $X$ with respect to the image under $\wt_ℳ$
  \end{algorithmic}
\end{algorithm}

\Cref{alg:coarse-to-fine} describes coarse-to-fine $n$-best parsing for weighted automata with storage.
The inputs are
  an $(S, Σ, K)$-automaton $ℳ$,
  an $S$-proper approximation strategy $A$ which will be used to construct an approximation of $ℳ$,
  a natural number $n$ which specifies how many runs should be computed, and
  a word $w ∈ Σ^*$ which we want to parse.
The output is a set of $n$-best runs of $ℳ$ on $w$.
The algorithm starts with a set $X$ that is empty (line~1) and a set $Y$ that contains all the runs of $\app{A}{ℳ}$ on $w$ (line~2).
Then, as long as $X$ has less than $n$ elements or an element of $Y$ is greater than the smallest element in $X$ with respect to their weights (line~3),
  we take the greatest element $θ'$ of $Y$ (line~4),
  remove $θ'$ from $Y$ (line~5),
  calculate the corresponding sequences $θ$ of transitions from $ℳ$ (line~6), and
  add $θ$ to $X$ if $θ$ is a run of $ℳ$ (line~7).
  
We can restrict the automaton $\app{A}{ℳ}$ to the input $w$ with the usual product construction.
The set of runs of the resulting product automaton (let us call it $ℳ_{A, w}$) can be mapped onto $\Runs_{\app{A}{ℳ}}(w)$ by some projection $φ$.
Hence $ℳ_{A, w}$ (finitely) represents $\Runs_{\app{A}{ℳ}}(w)$.
The automaton $ℳ_{A, w}$ can be construed as a (not necessarily finite) graph $G_{A, w}$ with the $ℳ_{A, w}$-configurations as nodes.
The edges shall be labelled with the images of the corresponding transitions of $ℳ_{A, w}$ under $φ$.
Then the paths (i.e. sequences of edge labels) in $G_{A, w}$ from the initial $ℳ_{A, w}$-configuration to all the final $ℳ_{A, w}$-configurations are exactly the elements of $\Runs_{\app{A}{ℳ}}(w)$.
Those paths can be enumerated in descending order of their weights using a variant of Dijkstra's algorithm.
This provides us with a method to compute $\max_{θ' ∈ Y} \wt_{\app{A}{ℳ}}(θ')$ on line~3 and $θ'$ on line~4 of \cref{alg:coarse-to-fine}.

\begin{example}
  Let 
    $Γ = \{\text{a}, \text{b}, \text{c}\}$,
    $Σ = Γ ∪ \{\#\}$,
    $K$ be the Viterbi semiring \((ℕ ∪ \{∞\}, {\min}, {+}, ∞, 0)\) with linear order ${≤}$, and
    $A_\#: Γ^* → ℕ, u ↦ \lvert u \rvert$ be a total approximation strategy.
  Note that $\app{A_\#}{\mathrm{PD}_Γ} = \mathrm{Count}$.
  Now consider
    the \((\mathrm{PD}_Γ, Σ, K)\)-automaton \(ℳ = ([3], T, \{1\}, \{3\}, δ)\) and
    the $(\mathrm{Count}, Σ, K)$-automaton \(\app{A_\#}{ℳ} = ([3], T', \{1\}, \{3\}, δ')\) where
    \(T = \{τ_1, …, τ_8\}\) and \(T' = \{τ_1', τ_2', τ_4', τ_5', τ_6', τ_7', τ_8'\}\) with
  \begin{align*}
    &\begin{array}[t]{@{}r@{{=} (}l@{,\,}l@{,\,}l@{,\,}l@{,\,}l@{)}}
      τ_1 & 1 & \text{a} & Γ^*                    & \mathrm{push}_{\text{a}} & 1 \\
      τ_5 & 2 & \text{a} & \mathrm{top}_{\text{a}} & \mathrm{pop}            & 2
    \end{array}
    &&\begin{array}[t]{@{}r@{{=} (}l@{,\,}l@{,\,}l@{,\,}l@{,\,}l@{)}}
      τ_2 & 1 & ε        & Γ^*                    & \mathrm{push}_{\text{b}} & 1 \\
      τ_6 & 2 & \text{b} & \mathrm{top}_{\text{b}} & \mathrm{pop}_{\text{b}}  & 2
    \end{array}
    &&\begin{array}[t]{@{}r@{{=} (}l@{,\,}l@{,\,}l@{,\,}l@{,\,}l@{)}}
     τ_3 & 1 & ε        & Γ^*                    & \mathrm{push}_{\text{c}} & 1 \\
     τ_7 & 2 & \text{c} & \mathrm{top}_{\text{c}} & \mathrm{pop}            & 2
    \end{array}
    &&\begin{array}[t]{@{}r@{{=} (}l@{,\,}l@{,\,}l@{,\,}l@{,\,}l@{)}}
     τ_4 & 1 & \#       & Γ^*             & \mathrm{stay}     & 2 \\
     τ_8 & 2 & ε        & \mathrm{bottom} & \mathrm{stay}     & 3
    \end{array} \\[.3em]
    &\begin{array}[t]{@{}r@{{=}(}l@{,\,}l@{,\,}l@{,\,}l@{,\,}l@{)}}
      τ_1' & 1 & \text{a} & ℕ   & \mathrm{inc} & 1 \\
      τ_5' & 2 & \text{a} & ℕ_+ & \mathrm{dec} & 2
    \end{array}
    &&\begin{array}[t]{@{}r@{{=}(}l@{,\,}l@{,\,}l@{,\,}l@{,\,}l@{)}}
      τ_2' & 1 & ε        & ℕ   & \mathrm{inc} & 1 \\
      τ_6' & 2 & \text{b} & ℕ_+ & \mathrm{dec} & 2
    \end{array}
    &&\begin{array}[t]{@{}l@{}}
      \\
      τ_7' {=} (2,\, \text{c},\,ℕ_+,\,\mathrm{dec},\,2)
    \end{array}
    &&\begin{array}[t]{@{}r@{{=}(}l@{,\,}l@{,\,}l@{,\,}l@{,\,}l@{)}l}
      τ_4' & 1 & \#       & ℕ     & \mathrm{id} & 2 \\
      τ_8' & 2 & ε        & \{0\} & \mathrm{id} & 3 & \text{,}
    \end{array}
  \end{align*}
  $δ(τ) = 1$ for each $τ ∈ T$, and $δ'(τ') = 1$ for each transition $τ' ∈ T'$.
  \footnote{$L(ℳ) = \{\text{a}^k\#w ∣ k ∈ ℕ, w ∈ \{a, b, c\}^*, \text{a occurs $k$ times in $w$}\}$ and $L(\app{A_\#}{ℳ}) = \{\text{a}^k\#w ∣ k ∈ ℕ, w ∈ \{a, b, c\}^*, \lvert w \rvert ≥ k\}$.}
  
  We use \cref{alg:coarse-to-fine} to obtain the 1-best run of $w = \text{a}\#\text{b}\text{a}$:
  On line~4, we get \(θ' = τ_1'τ_2'τ_4'τ_7'τ_5'τ_8'\) (the only run of $\app{A_\#}{ℳ}$ on $w$).
  Then there are only two possible values for $θ$ on line~7, namely \(θ_1 = τ_1τ_2τ_4τ_7τ_5τ_8\) and \(θ_2 = τ_1τ_3τ_4τ_7τ_5τ_8\) of which only $θ_2$ is a run of $ℳ$, hence the algorithm returns $\{θ_2\}$.  
\end{example}

\paragraph{Outlook.}
The author intends to extend \cref{alg:coarse-to-fine} to use multiple levels of approximation (i.e. multiple approximation strategies that can be applied in sequence) and to investigate the viability of this extension for parsing multiple context-free languages in the context of natural languages.

\section*{Acknowledgements}

The author thanks Mark-Jan Nederhof for fruitful discussions and the anonymous reviewers of a previous version of this paper for their helpful comments.
In particular, \cref{ex:pd-with-pop-star} is due to a reviewer's comment and \cref{ex:implementing-nd-storage-no-powerset-proper-containment} is due to Mark-Jan Nederhof.

\bibliographystyle{eptcs}
\bibliography{references}
\inputencoding{utf8}

\end{document}